%
%
%
%
%
%
%
\documentclass[%
reprint,
superscriptaddress,
 amsmath,amssymb,
 aps,
]{revtex4-1}

\usepackage{mathtools}
\usepackage{dcolumn}
\usepackage{bm}


\usepackage{graphicx}
\usepackage{float}
\usepackage{subcaption}

\usepackage{amsmath}
\usepackage{amssymb}
\usepackage{amsthm}
\usepackage{algorithmic}
\usepackage{mathrsfs}
\usepackage{braket}
\usepackage{xcolor}
\usepackage{tikz}
\usepackage[roman]{complexity}
\usepackage{pax}




\newcommand{\wt}{\widetilde}

\usepackage{soul}
\newcommand{\risi}[1]{{\color{blue}#1}}

\newtheorem*{theorem*}{Theorem}
\newtheorem{theorem}{Theorem}
\newtheorem{lemma}{Lemma}
\newtheorem{corollary}{Corollary}
\newtheorem{definition}{Definition}
\newtheorem*{definition*}{Definition}

\usepackage{pdfpages}
\usepackage{pgffor}

\makeatletter
\AtBeginDocument{\let\LS@rot\@undefined}
\makeatother

\usepackage[hyperindex]{hyperref}
\hypersetup{colorlinks,linkcolor={blue},citecolor={blue},urlcolor={black}} %


\begin{document}

\title{Speeding up Learning Quantum States through Group Equivariant Convolutional Quantum Ans{\"a}tze}

\author{Han Zheng}
\email{hanz98@uchicago.edu}
\affiliation{Department of Statistics, The University of Chicago, Chicago, IL 60637, USA}
\affiliation{DAMTP, Center for Mathematical Sciences, University of Cambridge, Cambridge CB30WA, UK}

\author{Zimu Li}
\email{lizm@mail.sustech.edu.cn}
\affiliation{DAMTP, Center for Mathematical Sciences, University of Cambridge, Cambridge CB30WA, UK}

\author{Junyu Liu$^*$}
\email{junyuliu@uchicago.edu}
\affiliation{Pritzker School of Molecular Engineering, The University of Chicago, Chicago, IL 60637, USA}
\affiliation{Chicago Quantum Exchange, Chicago, IL 60637, USA}
\affiliation{Kadanoff Center for Theoretical Physics, The University of Chicago, Chicago, IL 60637, USA}

\author{Sergii Strelchuk}
\email{ss870@cam.ac.uk}
\affiliation{DAMTP, Center for Mathematical Sciences, University of Cambridge, Cambridge CB30WA, UK}

\author{Risi Kondor}
\email{risi@cs.uchicago.edu}
\affiliation{Department of Statistics, The University of Chicago, Chicago, IL 60637, USA}
\affiliation{Department of Computer Science, The University of Chicago, Chicago, IL 60637, USA}
\affiliation{Flatiron Institute, New
York City, NY 10010, USA}

\date{\today}

\begin{abstract}
    We develop a theoretical framework for $S_n$-equivariant convolutional quantum circuits with SU$(d)$-symmetry, building on and significantly generalizing Jordan's Permutational Quantum Computing (PQC) formalism based on Schur-Weyl duality connecting both SU$(d)$ and $S_n$ actions on qudits. In particular, we utilize the Okounkov-Vershik approach to prove Harrow's statement on the equivalence between $\operatorname{SU}(d)$ and $S_n$ irrep bases and to establish the $S_n$-equivariant Convolutional Quantum Alternating Ans{\"a}tze ($S_n$-CQA) using Young-Jucys-Murphy (YJM) elements. We prove that $S_n$-CQA is able to generate any unitary in any given $S_n$ irrep sector, which may serve as a universal model for a wide array of quantum machine learning problems with the presence of SU($d$) symmetry. Our method provides another way to prove the universality of Quantum Approximate Optimization Algorithm (QAOA) and verifies that 4-local SU($d$) symmetric unitaries are sufficient to build generic SU($d$) symmetric quantum circuits up to relative phase factors. We present numerical simulations to showcase the effectiveness of the ans{\"a}tze to find the ground state energy of the $J_1$--$J_2$ antiferromagnetic Heisenberg model on the rectangular and Kagome lattices. Our work provides the first application of the celebrated Okounkov-Vershik's $S_n$ representation theory to quantum physics and machine learning, from which to propose quantum variational ans\"atze that strongly suggests to be classically intractable tailored towards a specific optimization problem. 
\end{abstract}

\maketitle
$*$: corresponding author.


\section{Introduction}

The combination of new ideas from machine learning and recent developments in quantum computing has lead to an impressive array of new applications. \cite{harrow2009quantum,wiebe2012quantum,lloyd2014quantum,wittek2014quantum,wiebe2014quantum,rebentrost2014quantum,biamonte2017quantum,mcclean2018barren,schuld2019quantum,tang2019quantum,havlivcek2019supervised,liu2021rigorous,Liu:2021wqr}. Prominent examples of this interplay are the Variational Quantum Eigensolver (VQE) and Quantum Approximate Optimization Algorithm (QAOA) \cite{Farhi_2014,mcclean2016theory,cerezo2021variational}, which are considered among the most promising quantum machine learning approaches in the Noisy Intermediate-Scale Quantum (NISQ) \cite{preskill2018quantum} era. VQEs and QAOA have shown tremendous promise in quantum simulation and quantum optimization \cite{peruzzo2014variational,Hadfield_2019,mcardle2020quantum,Yuan:2020xmq,Liu:2021otn, Farhi_2021}.

One of the most important neural network architectures in classical machine learning are Convolutional Neural Networks (CNNs) \cite{lecun1998gradient, krizhevsky2012imagenet,simonyan2014very,szegedy2015going,lecun2015deep}. In recent years, CNNs have also found applications in condensed matter physics and quantum computing. For instance, \cite{Lukin_2019} proposes a quantum convolutional neural network with $\log N$ parameters to solve topological symmetry-protected phases in quantum many-body systems, where $N$ is the system size. One of the key properties of classical CNNs is equivariance, which roughly states that if the input to the neural network is shifted, then its activations translate accordingly. There have been several attempts to introduce theoretically sound analogs of convolution and equivariance to quantum circuits, but they have generally been somewhat heuristic. The major difficulty is that the translation invariance of CNNs lacks a mathematically rigorous quantum counterpart due to the discrete spectrum of spin-based quantum circuits. For example, \cite{Lukin_2019} uses the quasi-local unitary operators to act vertically across all qubits.

In quantum systems there is a discrete set of translations corresponding to permuting the qudits as well as a continuous notion of translation corresponding to spatial rotations by elements of $\operatorname{SU}(d)$. Combining these two is the realm of so-called Permutational Quantum Computing (PQC) \cite{pqc}. Therefore, a natural starting point for realizing convolutional neural networks in quantum circuits is to look for {\it permutation equivariance}. In one of our related works \cite{Zheng2022PQC}, we argued that the natural form of equivariance in quantum circuits is permutation equivariance and we introduced a theoretical framework to incorporate group-theoretical CNNs into the quantum circuits, building on and generalizing the PQC framework to what we call PQC+~\cite{Zheng2022PQC}. 

In this paper, we further explore PQC+ and its significance for machine learning applications. Roughly speaking, PQC+ machine consists of unitary time evolutions of $k$-local SU($d$)-symmetric Hamiltonian. As a feature, \emph{Schur-Weyl duality} between the aforementioned $S_n$ and $\operatorname{SU}(d)$ actions on qudits systems appears naturally and will be used throughout the paper. Most importantly, it indicates that any $\operatorname{SU}(d)$ symmetric quantum circuits can be expressed in $S_n$ irreducible representations (irreps). Exploiting the power of $S_n$ representation theory in quantum circuits towards NISQ applications is thus the central theme of the paper. The representation theory of $S_n$  has been found to be a powerful tool in various  permutation equivariant learning tasks, e.g., learning set-valued functions \cite{Zaheer2017a} and learning on graphs \cite{maron2018invariant,thiede2020general}. Most applications of permutation-equivariant neural networks work with a subset of representations of $S_n$. In contrast, in physical and chemical models where the Hamiltonian exhibits global $\operatorname{SU}(d)$ symmetry, such as the Heisenberg model, it is necessary to consider \emph{all} the $S_n$ irreps (a detailed explanation of this significant insight can be found in Section \ref{s5}). However, even the best classical Fast Fourier Transforms (FFTs) over the symmetric group $S_n$ require at least $\mathcal{O}(n!n^2)$ operations \cite{Clausen_1993,Maslen_1998},
which dashes any hope of calculating the Fourier coefficients even for relatively small $n$. Indeed, despite increasing realization of the importance of enforcing $\operatorname{SU}(2)$ symmetry, none of the neural-network quantum state (NQS) ans{\"a}tze are able to respect $\operatorname{SU}(2)$ symmetry for all $\operatorname{SU}(2)$ irreps, due to the super-polynomial growth of the multiplicities of irreps and the super-exponential cost to compute Fourier coefficients over $S_n$. Finding variational ans{\"a}tze respecting continuous rotation symmetry is desirable because it not only helps to gain important physical insights about the system but also leads to more efficient simulation algorithms \cite{Vieijra_2021, VieijraRBM}.   

Motivated by the class of problems with a global $\operatorname{SU}(d)$ symmetry, in Section \ref{s3}, we construct what we call the variational $S_n$-equivariant Convolutional Quantum Alternating ans\"{a}tze ($S_n$-CQA), which are products of alternating exponentials of certain Hamiltonians admitting SU(d) symmetry. This is a concrete example of the PQC+ framework and may also be thought as a special case of QAOA with SU($d$) symmetry. Using the \emph{Okounkov-Vershik approach} \cite{Okounkov1996} to $S_n$ representation theory as well as other classical results from the theory of Lie group and Lie algebra \cite{Kuranishi51,Yamabe1950} we prove that $S_n$-CQA generates any unitary matrix in each given $S_n$-irrep block decomposed from an $n$-qudit system, hence it acts as a \emph{restricted universal} variational model for problems that possess global SU($d$) symmetry (Theorem \ref{ex2}). Consequently, it can be applied to a wide array of machine learning and optimization tasks that exhibit global SU($d$) symmetry or require explicit computation of high dimensional $S_n$-irreps, presenting a quantum super-exponential speed-up. Our proof techniques are of independent interest and we provide two more applications. It is shown in \cite{Lloyd_2018, Morales_2020} that QAOA ans\"{a}tze generated by simple local Hamiltonians are universal in the common sense. Forgetting the imposed symmetry, we use our techniques to derive the universality for a different but related class of QAOA ans{\"a}tze with a richer set of mixer Hamiltonians (Theorem \ref{QAOA-Univ}). In addition, we find a \emph{4-local} $S_n$-CQA model which is universal to build any SU($d$) symmetric quantum circuits up to phase factors (Theorem \ref{4-local}) with awareness of the fact that \emph{2-local} SU(d) symmetric unitaries cannot fulfill the task when $d \geq 3$ \cite{Marin1,MarvianSUd}. Consequently, when compared with other SU(d) symmetric ans{\"a}tze, products of exponentials of SWAPs (eSWAPs) proposed in \cite{Seki_2020} admit the restricted universality with $\operatorname{SU(d)}$ symmetry only when $d = 2$ and the CQA model is universal in general cases when restricted to any one of $S_n$ irreps. 

In Section \ref{s4} we explore more details about Schur-Weyl duality on qudits systems. To be specific, talking about $S_n$ or SU($d$) irrep blocks in a qudits system requires using the \emph{Schur basis}, instead of the computational basis. The Schur basis can be constructed by either SU($d$) Clebsch-Gordan decomposition \cite{harrow2005PhD, pqc, sergii1} or by the $S_n$ branching rule \cite{Goodman2009,Tolli2009,krovi}, which yields two ways to label the basis elements by either SU($d$) Casimir operators or by the so-called \emph{YJM-elements} used in Okounkov-Vershik approach. We rigorously demonstrate the equivalence of these labeling schemes (Theorem \ref{Spin-Content}), first conjectured in Harrow's thesis \cite{harrow2005PhD}, (see also discussions by Bacon, Chuang, and Harrow \cite{Harrow06} and Krovi \cite{krovi}). As a result, we find a state initialization method, using constant-depth qudit circuits, to produce linear combinations of Schur basis vectors, which may be preferred in NISQ devices rather than implementing a Quantum Schur transform (QST). We show that the measurements taken for variationally updating parameters in $S_n$-CQA can be efficiently calculated on the Schur basis, while similar conclusion is unlikely to be drawn classically. 

In the numerical part~\ref{s5}, we illustrate the potential of this framework by applying it to the problem of finding the ground state energy of $J_1-J_2$ antiferromagnetic Heisenberg magnets, a gapless system with no known sign structure or analytical solution in quantum many body theory. We compare our model with classical and quantum algorithms like \cite{VieijraRBM,Roth_2021,Seki_2020}. We emphasize the consequence of the failure of \textit{Marshall-Lieb-Mattis theorem} \cite{Marshall1955,Lieb62,Tasaki2020} in the frustrated region with which classical neural networks struggle to discover the sign structure to an admissible accuracy due to violation of global $\operatorname{SU}(2)$ symmetry \cite{Choo_2019,Castelnovo_2020}. We include numerical simulation to show the effectiveness of the $S_n$-CQA ans{\"a}tze in finding the ground state with frustration using only $\mathcal{O}(pn^2)$ parameters for $p$ alternating layers. Noisy simulations are also provided to show the robustness of $S_n$-CQA. 

Our theoretical results about SU($d$) symmetry can be reversed to exhibit $S_n$ permutation symmetry. We define SU($d$)-CQA on SU($d$)-irrep blocks and leave it to future research work to explore its theoretical and experimental potential. All statements and theorems discussed in the main text are proved in full detail in the Supplementary Materials (SM).


\section{Background on Representation Theory of The Symmetric Group} \label{s2}

In this section we define some of the mathematical concepts and notations used in the rest of the paper. Further details can be found in~\cite{Sagan01,Goodman2009,Tolli2009}.

Let $V$ be a $d$-dimensional complex Hilbert space with orthonormal basis $\{e_1,...,e_d\}$. The tensor product space $V^{\otimes n}$ admits two natural representations: 
the \textit{tensor product representation} $\pi_{\operatorname{SU}(d)}$ of $\operatorname{SU}(d)$ acting as
\begin{align*}
	\pi_{\operatorname{SU}(d)}(g) (e_{i_1} \otimes \cdots \otimes e_{i_n}) \vcentcolon = g \cdot e_{i_1} \otimes \cdots \otimes g \cdot e_{i_n}, 
\end{align*}
where $g \cdot e_{i_k}$ is the fundamental representation of $\operatorname{SU}(d)$, and the \textit{permutation representation} $\pi_{S_n}$ of $S_n$ acting as
\begin{align*}
	\pi_{S_n}(\sigma) (e_{i_1} \otimes \cdots \otimes e_{i_n}) \vcentcolon =
	e_{i_{\sigma^{-1}(1)}} \otimes \cdots \otimes e_{i_{\sigma^{-1}(n)}}.
\end{align*}
We treat $V^{\otimes n}$ as the Hilbert space of an $n$-qudit system. The so-called Schur--Weyl duality reveals how the above two representations are related.

Schur--Weyl duality is widely used in quantum computing \cite{childs2007weak}, quantum information theory \cite{harrow2005PhD} and high energy physics \cite{Keppeler2018}. In particular, in Quantum Chromodynamics it was used to decompose the $n$-fold tensor product of $\operatorname{SU}(3)$ representations. In that context, standard Young tableaux are referred to as Weyl-tableaux and labeled by the three iso-spin numbers ($u,d,s$). The underlying Young diagrams containing three rows $\lambda = (\lambda_1,\lambda_2,\lambda_3)$ are used to denote an $\operatorname{SU}(d)$ irreducible representation (irrep). There is another way using Young diagrams $\lambda' = \lambda_1 - \lambda_3,\lambda_2 - \lambda_3)$ labeled by Dynkin integers via highest weight vectors. In short, there are two conventions in literature to denote $\operatorname{SU}(d)$ irreps. On the other hand, $S_n$ irreps can also be denoted by Young diagrams \cite{Sagan01}. Schur--Weyl duality says that irreps of $\operatorname{SU}(d)$ and $S_n$ are dual in the following sense and denoted by the same Young diagrams \emph{with $n$ boxes and at most $d$ rows}.

\begin{figure}[H]
	\centering
	\includegraphics[width=2in]{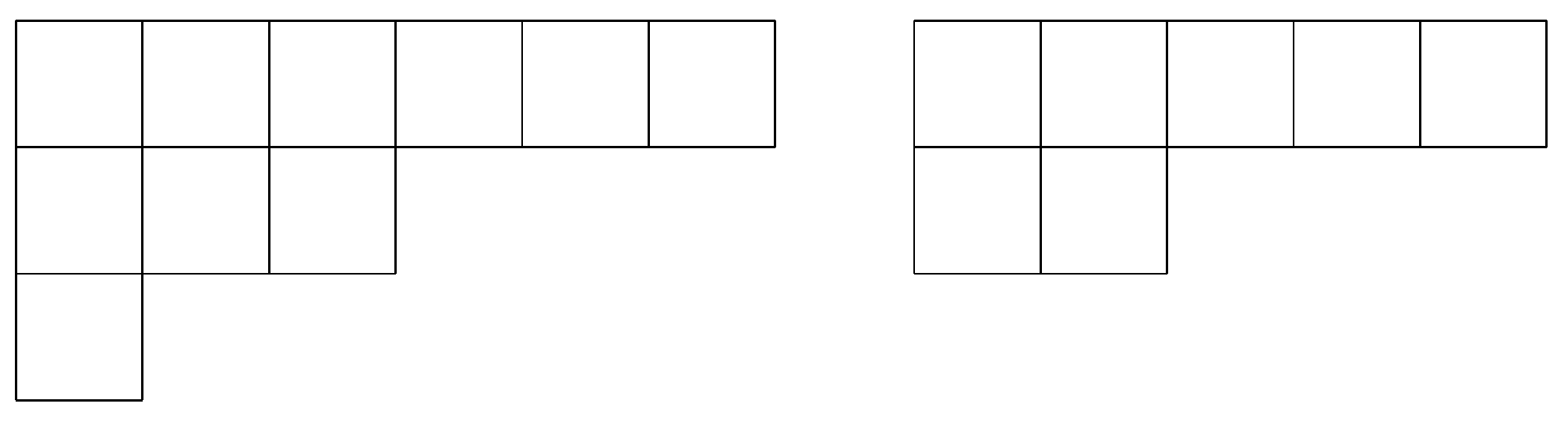}
	\caption{Young diagrams $\lambda = (6,3,1)$ and $\lambda' = (5,2)$ represent to the same $\operatorname{SU}(3)$-irrep of highest weight $(3,2)$. However, in the context of Schur--Weyl duality, $\lambda$ corresponds to an $S_{10}$ irrep while $\lambda'$ gives an $S_{7}$ irrep.}
	\label{YoungDiagram}
\end{figure}

\begin{theorem*}[Schur--Weyl Duality] 
	The action of $\operatorname{SU}(d)$ and $S_n$ on $V^{\otimes n}$ jointly decompose the space into irreducible representations of both groups in the form 
	\begin{align*}
		V^{\otimes n} = \bigoplus_\lambda W_\lambda\otimes S^\lambda,
	\end{align*}
	where $W_\lambda$ and $S^\lambda$ denote irreps of $\operatorname{SU}(d)$  resp.~$S_n$, and $\lambda$ ranges over all Young diagrams of size $n$ with at most $d$ rows. Consequently,
	\begin{align*}
		& \pi_{\operatorname{SU}(d)} \cong \bigoplus_\mu W_\mu \otimes \operatorname{1}_{m_{\operatorname{SU}(d),\mu}}, \quad  \pi_{S_n} \cong \bigoplus_\lambda \operatorname{1}_{ m_{S_n,\lambda}} \otimes S^\lambda,
	\end{align*}
	where $m_{\operatorname{SU}(d),\mu}=\dim S^\mu$ and $m_{S_n,\lambda}=\dim W_\lambda$.
	
\end{theorem*}

\begin{figure*}[ht]
	\centering
	\includegraphics[width=0.9\textwidth]{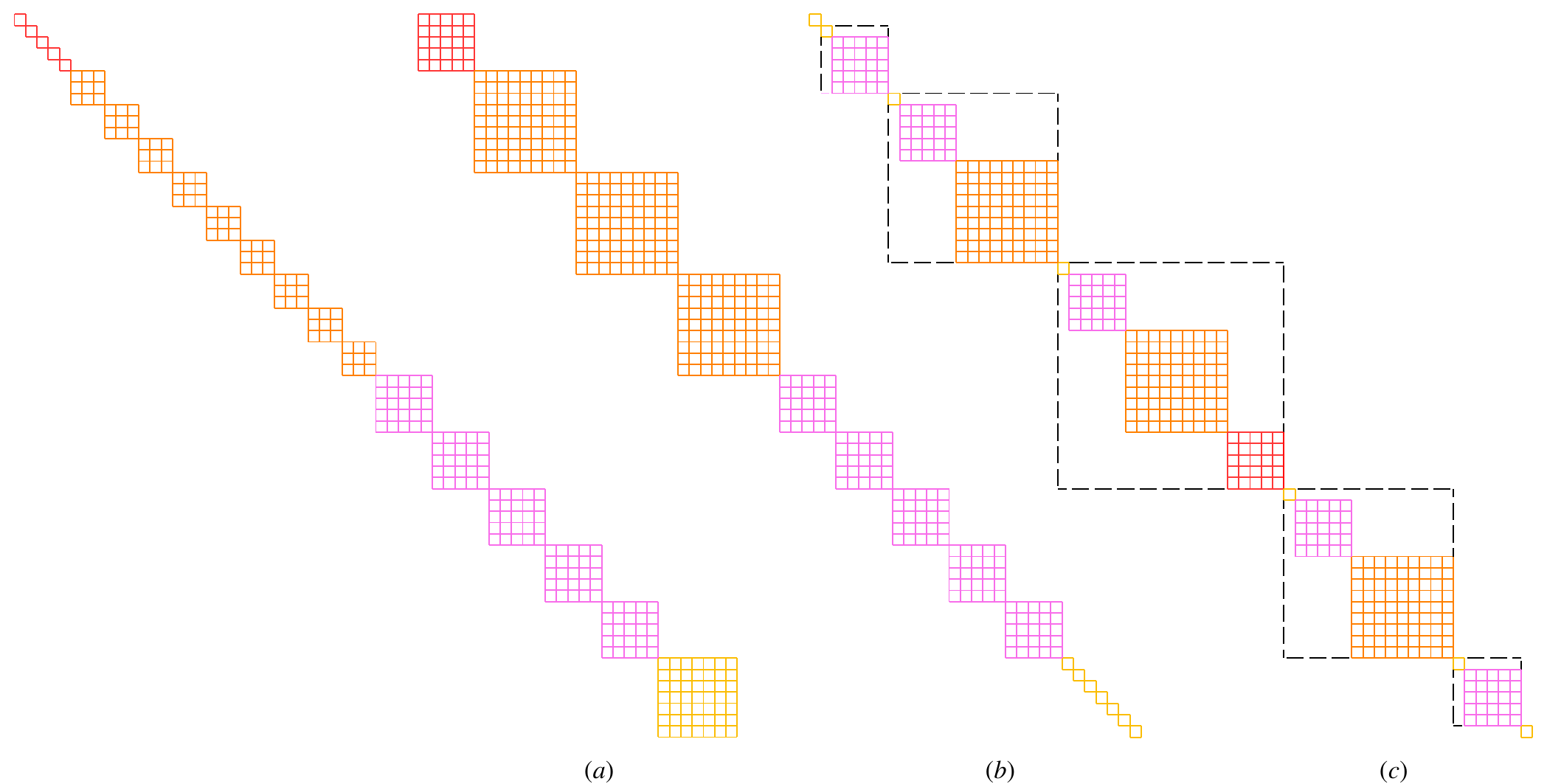}
	\caption{$(a)$ is the decomposition of $(\mathbb{C}^2)^6$ with respect to $\operatorname{SU}(2)$ action while $(b)$ is for $S_6$ by Schur-Weyl duality. $(c)$ is an arrangement of $(b)$ which respects both permutation modules and $S_6$ irreps.}
	\label{SchurWeyl}
\end{figure*}

One can easily verify that $\pi_{\operatorname{SU}(d)}$ and $\pi_{S_n}$ commute (further properties are described in the SM). Consider the \textit{symmetric group algebra} $\mathbb{C}[S_n]$ consisting of all formal finite sums $f = \sum_i c_i \sigma_i$. Its representation is then $\tilde{\pi}_{S_n}(f) = \sum_i c_i \pi_{S_n}(\sigma_i)$. When there is no ambiguity, we denote by $U_\sigma$ or simply $\sigma$ the representations $\pi_{S_n}(\sigma)$. 

Working from the perspective of Schur--Weyl duality requires, at least theoretically, using the \emph{Schur basis} rather than the computational basis. A conventional way to build such a basis is conducting sequential coupling and Clebsch--Gordan decompositions of $\operatorname{SU}(d)$ representations \cite{harrow2005PhD, pqc, sergii1} which transform common matrix representations of $\operatorname{SU}(d)$ into irrep matrix blocks like in Fig.\ref{SchurWeyl} $(a)$. Since our focus are ans{\"a}tze, operators and quantum circuits with $\operatorname{SU}(d)$ symmetry which commute with $\pi_{\operatorname{SU}(d)}$, and since Schur--Weyl duality and the double commutant theorem (see SM) say that they must be established from the group algebra $\mathbb{C}[S_n]$, we need to explore $S_n$ irreps blocks as in Fig.\ref{SchurWeyl} $(b)$ however. We are going to introduce a method to decompose permutation matrices in this picture and explain basic notions in $S_n$ representation theory and the Okounkov-Vershik approach, since they are essential to understand the theoretical results in this paper. 

We first consider the so-called \textit{permutation module} $M^\mu$. In the case of qubits, permutation modules correspond to sets of Schur basis elements having different read-out on total spin components. Fortunately there is an accessible way to understand $M^\mu$ in the tensor product space $V^{\otimes n}$. To make things simpler, consider the $d=2$ case of $\operatorname{SU}(2)-S_n$ duality on $(\mathbb{C}^2)^{\otimes n}$. Only two-row Young diagrams $\lambda = (\lambda_1, \lambda_2)$ appear in this duality and the half of difference $\frac{1}{2}(\lambda_1 - \lambda_2)$ between the lengths of the two rows gives the total spin of the $\operatorname{SU}(2)$-irrep $W_\lambda$. The permutation module $M^\mu$ is isomorphic with the linear span of all computational basis vectors with $z$-spin components equal to $\frac{1}{2}(\mu_1 - \mu_2)$.

Note that, $(\mathbb{C}^2)^{\otimes n} = \bigoplus_\mu M^\mu$ and each $M^\mu$ is invariant under $S_n$ permutation. Furthermore, $M^\mu$ can be further decomposed into $S_n$-irreps. For two-row Young diagrams ($\operatorname{SU}(2)-S_n$ duality), the decomposition is easy: $M^\mu = \bigoplus_{\lambda \geq \mu} S^\lambda$ where $\lambda, \mu$ have the same size $n$ and we use the \textit{dominance order} $\lambda \unrhd \mu$ if $\lambda_1 \geq \mu_1$. In summary, we have
\begin{align*}
	(\mathbb{C}^2)^{\otimes n} = \bigoplus_\mu M^\mu = \bigoplus_\mu \bigoplus_{\lambda \unrhd \mu} S^\lambda \cong \bigoplus_\lambda \operatorname{1}_{ m_{S_n,\lambda}} \otimes S^\lambda. 
\end{align*} 
Isomorphic copies of $S^\lambda$ come from different permutation modules.  The largest permutation module contains all distinct $S_n$-irreps in $(\mathbb{C}^2)^{\otimes n}$ (see e.g., Fig.\ref{SchurWeyl} $(c)$). For general Young diagrams, decompositions of $M^\mu$ would have nontrivial multiplicities \cite{Sagan01,Tolli2009}.

\begin{figure}[H]
	\centering
	\includegraphics[width=3.3in]{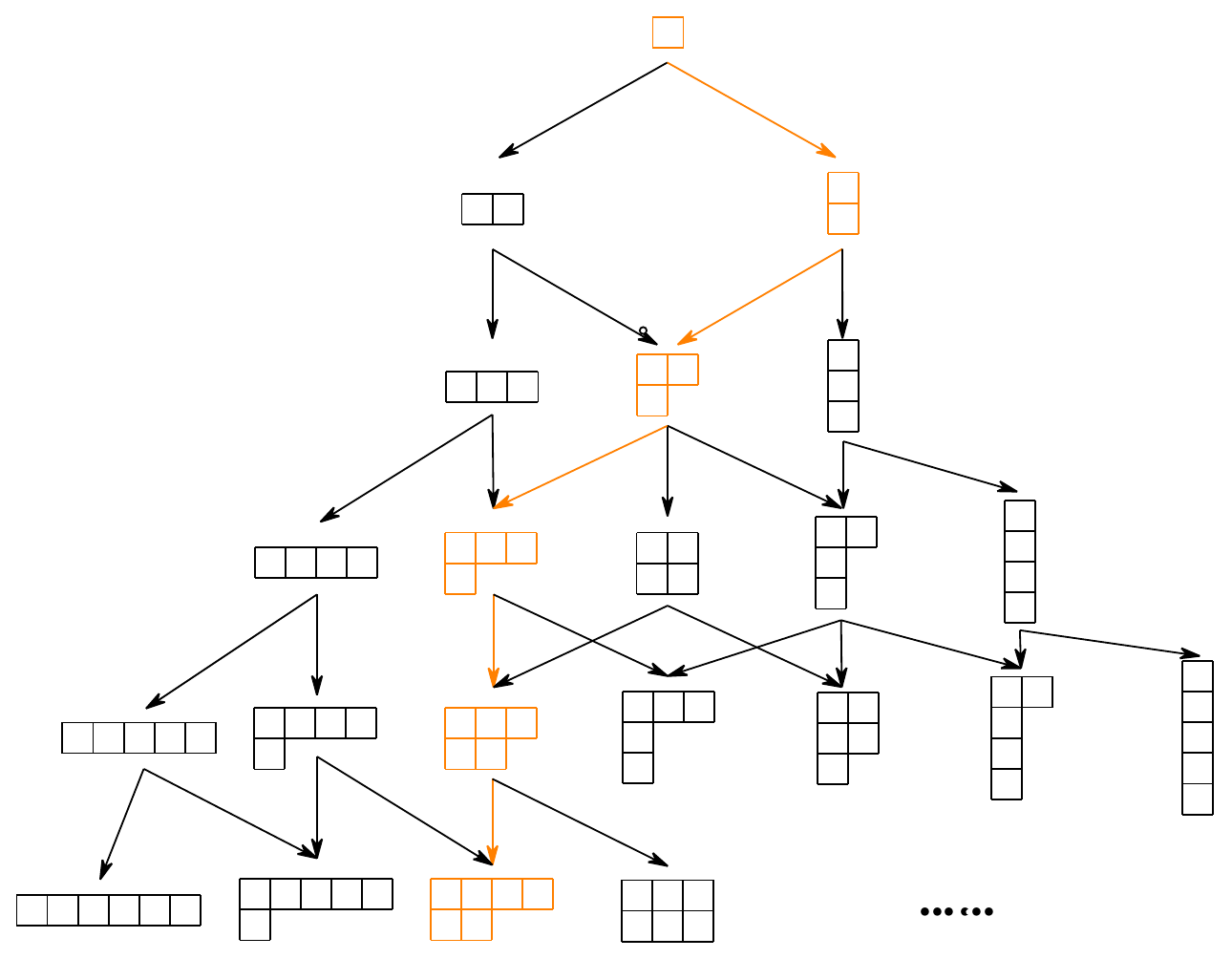}
	\caption{Bratteli diagram for $S_6$. Upper Young diagrams connecting by arrows to lower ones arise from the decomposition. Orange arrows form a path. Note that diagrams with more than two rows cannot appear in $\operatorname{SU}(2)-S_6$ duality.}
	\label{Bratteli}
\end{figure}

Each $S^\lambda$ can be decomposed further with respect to $S_{n-1} \subset S_n$ as $S^\lambda = \bigoplus_{\rho } S^{\lambda\rho}$, where $S^{\lambda\rho}$ denotes an $S_{n-1}$ irrep of Young diagram $\rho$ (with $n-1$ boxes) contained in the $S_n$ irrep $S^\lambda$. The so-called \textit{branching rule} guarantees that the decomposition is multiplicity-free, i.e., each distinct $S_{n-1}$-irrep $S^{\lambda\rho}$ appear\risi{s} only once in the decomposition. The so-called Bratteli diagrams in Fig.\ref{Bratteli} show how how different irreps are decomposed. Continuing the decomposition process for $S_{n-2},...,S_1$, the original space $S^\lambda$ will be written as a direct sum of 1-dimensional subspaces ($S_1$-irreps are 1-dimensional):
\begin{align*}
	S^\lambda & = \bigoplus_{S_{n-1},\rho} S^{\lambda\rho} = \cdots = \bigoplus_{S_{n-1},\rho} \cdots \bigoplus_{S_1,\tau} S^{\lambda\rho\sigma \cdots \tau}.
\end{align*}
Each 1-dimensional subspace $S^{\lambda\rho\sigma \cdots \tau}$ can be represented by a nonzero vector in it. Normalizing them, we obtain an orthonormal basis $\{\ket{v_T}\}$ of $S^\lambda$ called the \textit{Gelfand--Tsetlin basis (GZ)} or \textit{Young-Yamanouchi basis}. Indices $\lambda, \rho, \sigma,...,\tau$ form a path in the Bratteli diagram (see Fig.\ref{Bratteli}) and can be used to define a standard Young tableau $T$ (Fig.\ref{Path}). Young basis vectors are in one-to-one correspondence with standard Young tableaux \cite{Sagan01}. The branching rule is also discussed in $\operatorname{SU}(d)$ representation theory and some authors refer to the $\operatorname{SU}(d)$-irrep basis as the GZ-basis if it is constructed in a similar manner. A more comprehensive discussion about building Schur basis by SU(d) Clebsch-Gordan decomposition and by $S_n$ branching rule is postponed in Section \ref{s4}.

\begin{figure}[H]
	\centering
	\includegraphics[width=3.5in]{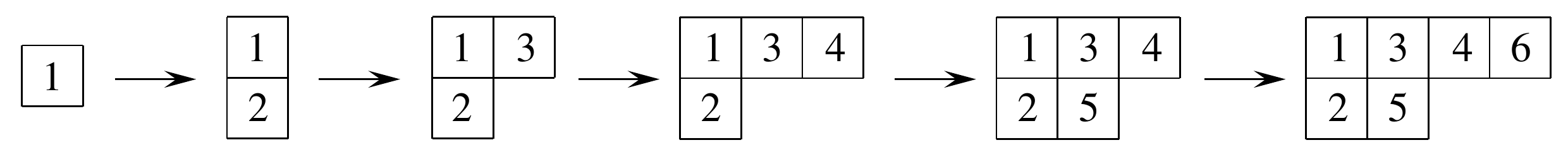}
	\caption{The Standard Young tableau defined by the path in Fig.\ref{Bratteli}}
	\label{Path}
\end{figure}

Let us introduce the central concept used in our work:

\begin{definition}
	For $1 < k \leq n$, the \textit{Young-Jucys-Murphy element}, or YJM-element for short, is defined as a sum of transpositions $X_k = (1,k) + (2,k) + \cdots + (k-1,k) \in \mathbb{C}[S_n]$. We set $X_1 = 0$ as a convention. 
\end{definition} 

As the name indicates, this concept was developed by Young \cite{Young1977}, Jucys \cite{Jucys1974} and Murphy \cite{Murphy1981}.  Okounkov and Vershik showed that  YJM-elements generate the \textit{Gelfand-Tsetlin subalgebra} $\operatorname{GZ}_n \subset \mathbb{C}[S_n]$~\cite{Okounkov1996}, which is a maximal commutative subalgebra consisting of all centers $Z[S_k]$ of $\mathbb{C}[S_k]$ for $k=1,...,n$. Another striking fact is that all YJM-elements are strictly diagonal (indeed, representation of $\operatorname{GZ}_n$ consists of all diagonal matrices) in a Young basis $\{\ket{v_T}\}$ whose eigenvalues can be read out directly from the standard Young tableaux $T$. To be precise, let $\lambda$ be a Young diagram. Since this is a 2-dimensional diagram, we can naturally assign integer coordinates to its boxes. The \textit{content} of each of its boxes is determined by the $x$-coordinate minus $y$-coordinate. Suppose $T$ is a standard Young tableau of $\lambda$. Arranging all contents with respect to $T$, we obtain the \textit{content vector} $\alpha_T$. For instance, the Young diagram $\lambda = (4,2)$ has contents $0,1,2,-1$. The specific standard tableau $T$ in Fig.\ref{Path} has content vector $\alpha_T = (0,-1,1,2,0,3)$. Let $\ket{v_T}$ denote a Young basis vector corresponding to $T$. Measuring by YJM-elements, we have $X_1 \ket{v_T} = 0,~ X_2 \ket{v_T} = -\ket{v_T},~ X_3 \ket{v_T} = \ket{v_T}$ and so forth. Each Young basis vector is determined uniquely by its content vector (see \cite{Okounkov1996} and \cite{Tolli2009} for more details).


\section{$S_n$ Convolutional Quantum Alternating Ans{\"a}tze}  \label{s3}

Consider the YJM-elements $\{ X_1, ..., X_n \}$ introduced in Section \ref{s2} which generate the maximally commuting subalgebra $\operatorname{GZ}_n$. The use of YJM-elements allows us to design the following \textit{mixer Hamiltonian}:
\begin{align}
	H_M = \sum_{i_1,..., i_N} \beta_{i_1... i_N}X_{i_1} \cdots X_{i_N}.
\end{align}
This YJM-Hamiltonian is still strictly diagonal under the Young basis. As there is an efficient quantum Schur transform $U_{\operatorname{Sch}}$ (QST) which transforms the computational basis to Schur basis even for qudits with gate complexity $\operatorname{Poly}(n, \log d,\log(1/\epsilon))$ \cite{harrow2005PhD,Harrow06,krovi}. It is reasonable to assume that we can initialize any Young basis element $\ket{\Psi_{\operatorname{init}}}$ from $S_n$-irreps via QST. Moreover, given a problem Hamiltonian $H_P$ with $\operatorname{SU}(d)$ symmetry, it can be written as $H_P = \wt{\pi} (f)$ for some $f \in \mathbb{C}[S_n]$ by the double commutant theorem and Wedderburn theorem (see Section \ref{s2} and SM). Inspired by QAOA (Pauli $X$ is diagonal wrt. $\ket{+}^{\otimes n}$) \cite{Farhi_2014}, we thus propose the following ansatz:
\begin{align*} 
	\cdots \exp (-i H_M ) \exp (-i \gamma H_P) \exp  (-i H_M )  \exp(-i \gamma^{\prime} H_P) \cdots.
\end{align*}
To summarize, we have defined a family of mixer operators $H_M$ parameterized by $\beta_{i_1... i_N}$ which is diagonal under the Young basis and naturally preserves each $S_n$-irrep determined by the initialized states. Following \cite{kondor1,kondor2} and one of our results \cite{Zheng2022PQC} which interpret quantum circuits as natural Fourier spaces, we call this family of ans{\"a}tze $S_n$-Convolutional Quantum Alternating Ans{\"a}tze ($S_n$-CQA).

Deviating from the original purpose of using QAOA to solve constraint satisfaction problems, our $S_n$-CQA are applied to problems with global $\operatorname{SU}(d)$ symmetry or permutation equivariance, as long as the problem Hamiltonian $H_P$ can be efficiently simulated in the circuit model. Indeed, most practical examples involve only 2- or 3-local spin-interactions such as the Heisenberg model studied in Section \ref{s5}, and thus by Theorem 2 of~\cite{Zheng2022PQC} can be efficiently simulated. On the other hand, the $k$-dependent constant in Corollary 1 from \cite{Zheng2022PQC} has exponential scaling. So in practice, we would like to have 3- or at most 4-local terms for the YJM-Hamiltonian simulation in ans{\"a}tze. We focus on the following mixer Hamiltonian consisting of only first and second order products of YJM-elements: 
\begin{align} \label{mixer2}
	H_M = \sum_{k \leq l} \beta_{kl} X_k X_l,
\end{align}
whose evolution can be efficiently simulated in $\mathcal{O}(n^4 \log(n^4/\epsilon)/\log\log(n^4/\epsilon))$. Below, we prove that with a mixture as in the Equation \eqref{mixer2}, the $S_n$-CQA ans{\"a}tze are able to approximate any unitary from every $S_n$-irrep block (Fig.\ref{SchurWeyl} $(b)$). This can be seen as a restricted version of universal quantum computation to $S_n$-irreps. Since the 4-local $S_n$-CQA is an all-you-need approximation algorithm within PQP+, it is strongly suggestive that the PQP+ class proposed in \cite{Zheng2022PQC} contains circuits that can approximate matrix elements of all the $S_n$ Fourier coefficients, for a polynomial number of alternating layers $p$. Moreover, the 4-local $S_n$-CQA is also the universal approximator for solutions of the problem with global $\operatorname{SU}(d)$ symmetry, such as the Heisenberg models, due to its nature as variational ans{\"a}tze.  


\subsection{Restricted Universality of $S_n$-CQA Ans{\"a}tze in $S_n$ Irreps} \label{s3.A}

We now present the main theoretical result of this paper: the $S_n$-CQA ans{\"a}tze approximate any unitary in any $S_n$-irrep decomposed from the system of qudits. This \emph{restricts} universal quantum computation on $\operatorname{U}(d^n)$ to $S_n$-irrep blocks (Fig.\ref{SchurWeyl} $(b)$) because our ans{\"a}tze preserve $\operatorname{SU}(d)$ symmetry. This is of interest for three reasons: $(a)$ the density result indicates that $S_n$-CQA ans{\"a}tze is a universal approximator in PQP+ proposed in \cite{Zheng2022PQC} and it is the theoretical guarantee of our numerical simulations. $(b)$ The result is valid for qudits under $\operatorname{SU}(d)-S_n$ duality and we show the advantage of working with $S_n$ as there is no need to deal with complicated $\operatorname{SU}(d)$ symmetry, generators in the proof. $(c)$ When changed from the Young basis to the computational basis, i.e., forgetting the  $\operatorname{SU}(d)$ symmetry, our results form a new proof to the universality of a broad class of QAOA ans{\"a}tze. $(d)$ It is shown in a recent work \cite{MarvianSUd} that $\operatorname{SU}(d)$-invariant/symmetric quantum circuits with $d \geq 3$ cannot be generated by 2-local $\operatorname{SU}(d)$-invariant unitaries. With the focus on locality, we verify that $S_n$-CQA ans{\"a}tze can be built by 4-local $\operatorname{SU}(d)$-invariant unitaries and 4-locality is enough to generate any $\operatorname{SU}(d)$-invariant quantum circuit up to phase factors.

Mathematically, we aim to show that the subgroup generated by $S_n$-CQA ans{\"a}tze is equal to the unitary group $\operatorname{U}(S^{\lambda})$ restricted to $S^{\lambda}$ decomposed from $V^{\otimes n}$. However, arguing directly on the level of the Lie group is complicated. Instead, we prove that the generated Lie algebra is isomorphic with the unitary algebra $\mathfrak{u}(S^{\lambda})$ restricted to $S^\lambda$. Then combining with some classical results from the theory of Lie group \cite{Kuranishi51,Yamabe1950} and the Okounkov-Vershik approach to $S_n$-representation theory \cite{Okounkov1996}, we complete the proof. We outline our results here and put all the proof details into the SM. 

Our first step is motivated by a classical result from the theory of Lie algebra: any semisimple Lie algebra can be generated by only two elements \cite{Kuranishi51}. Finding these elements would be tricky and encoding them by a quantum circuit would even be infeasible, so we will adopt a different routine and solve these problems gradually. We first work on the complex general linear algebra $\mathfrak{gl}(d,\mathbb{C})$ which is not semisimple, but facilitates our proof. To begin with, it is easy to find its Cartan subalgebra --- the collection $\mathfrak{d}(d)$ of all diagonal matrices. Let $M$ be a matrix with nonzero off-diagonal elements $c_{ij}$. It can be thought of as a perturbation from $\mathfrak{d}(d)$. We want to know how large the subalgebra generated by $M$ and $\mathfrak{d}(d)$ would be. More precisely:

\begin{lemma}\label{gl}
	Let $E_{ij} \in \mathfrak{gl}(d,\mathbb{C})$ be the matrix unit with entry $1$ at $(i,j)$ and $0$ elsewhere. Given any matrix $M$, let $\mathcal{I} \subset \{1,...,d\} \times \{1,...,d\}$ be the index set corresponding to nonzero off-diagonal entries $c_{ij}$ of $M$. Then the Lie subalgebra generated by $\mathfrak{d}(d)$ and $M$ contains 
	\begin{align*}
		\mathfrak{d}(d) \oplus \Big(  \bigoplus_{(i,j) \in \mathcal{I}} R_{ij} \Big),
	\end{align*}
	where $R_{ij}$ is the 1-dimensional root space spanned by $E_{ij}$.
\end{lemma}

Intuitively speaking, $\operatorname{GZ}_n$ defined in Section \ref{s2} corresponds precisely to the Cartan subalgebra of $\mathfrak{gl}(\dim S^\lambda,\mathbb{C})$. In proving Lemma \ref{gl}, we are required to used all basis elements of $\operatorname{GZ}_n$ rather than the $n$ YJM-generators \cite{Okounkov1996}. Thus we need to employ all high-order products $X_{i_1} \cdots X_{i_N}$ of YJM-elements (as $\dim S^\lambda$ increases exponentially for large number of qudits Fig.\ref{fig_scale}) and that pose the first problem for a practical ansatz design, which requires $k$-local $\mathbb{C}[S_n]$ Hamiltonian in order to be efficiently simulated by quantum circuits \cite{Zheng2022PQC}. This problem is solved in Lemma 2.6 the SM with the help of Okounkov-Vershik theorem \cite{Okounkov1996,Tolli2009}. We prove that the collection $\{X_i, X_k X_l\}$ of first- and second-order YJM-elements, while in general cannot form a basis for $\operatorname{GZ}_n$, are enough to establish Lemma \ref{gl}. As a reminder, merely taking the original YJM-elements $X_i$ is not sufficient and we provide counterexamples in the SM. 
As $X_i$ are 2-local, this result also provides some insights on the fact that 2-local $\operatorname{SU}(d)$-invariant unitaries cannot generate all quantum circuits with $\operatorname{SU}(d)$ symmetry in the general case \cite{Marin1,Marin2,MarvianSUd}.

As another ingredient of $S_n$-CQA ans{\"a}tze, the problem Hamiltonians $H_P$ of interest are complicated in general and hard to diagonalize classically. It also forms the other part (the matrix $M$) in generating the Lie algebra in Lemma \ref{gl}. For the purpose of easy implementation, we show in Lemma \ref{u} that $H_P$ only needs to be \emph{path-connected} or \emph{irreducible} in the language of graph theory. A Hamiltonian is of this kind if its associated index graph $\mathcal{G}_{H_P}$ is connected. For example, the Pauli $X$ and $Y$ are path-connected while $Z$ is not. We further prove in Lemma 3.1 in the SM that the 2-local Hamiltonian $H_S = \sum_{i = 1}^{n-1} (i,i+1)$ defined by all adjacent transpositions $(i,i+1) \in S_n$ is path-connected. We will discuss path-connectedness further after Theorem \ref{ex2} as well as in Section \ref{s5}. It is also seen in the famous Perron-Frobenius theorem and applied to graph theory. 

\begin{lemma}\label{u}
	Let $H_P$ be a path-connected Hamiltonian. Then the generated Lie algebra $\langle \mathfrak{d}(d), H_P \rangle = \mathfrak{gl}(d,\mathbb{C})$. Consider $\mathfrak{d}_\mathbb{R}(d)$ consisting of all real-valued diagonal matrices. Generated over $\mathbb{R}$, $\langle i\mathfrak{d}_\mathbb{R}(d), iH_P \rangle_\mathbb{R} = \mathfrak{u}(d)$. 
\end{lemma}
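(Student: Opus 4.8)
The plan is to derive Lemma \ref{u} as a direct consequence of Lemma \ref{gl}. The statement says: if $H_P$ is path-connected, then $\langle \mathfrak{d}(d), H_P \rangle = \mathfrak{gl}(d,\mathbb{C})$, and the real-span analogue gives $\mathfrak{u}(d)$. First I would recall what ``path-connected'' buys us: by definition, the index graph $\mathcal{G}_{H_P}$, whose vertices are $\{1,\dots,d\}$ and whose edges are the pairs $(i,j)$ with nonzero off-diagonal entry $c_{ij}$ of $H_P$, is connected. By Lemma \ref{gl}, the Lie algebra generated by $\mathfrak{d}(d)$ and $H_P$ already contains $\mathfrak{d}(d) \oplus \bigoplus_{(i,j)\in\mathcal{I}} R_{ij}$, i.e. the diagonal matrices together with every root space $R_{ij}=\mathbb{C}E_{ij}$ corresponding to an edge of $\mathcal{G}_{H_P}$.

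The core of the argument is then to bootstrap from the ``edge'' root spaces to \emph{all} root spaces using connectivity. The key computation is the bracket relation $[E_{ij}, E_{jk}] = E_{ik}$ whenever $i,j,k$ are distinct (and $[E_{ij},E_{jk}] = E_{ii}-E_{kk}$ when $i=k$). So if $(i,j)$ and $(j,k)$ are both edges of $\mathcal{G}_{H_P}$, the generated algebra contains $E_{ik}$; inducting along a path in $\mathcal{G}_{H_P}$ from any vertex $a$ to any vertex $b$ shows that $E_{ab}$ (hence $E_{ba}$ by transposing the path) lies in the generated algebra for every pair $a\neq b$. Together with $\mathfrak{d}(d)$, these span all of $\mathfrak{gl}(d,\mathbb{C})$, and since $\mathfrak{gl}(d,\mathbb{C})$ is closed under brackets we get equality: $\langle \mathfrak{d}(d), H_P\rangle = \mathfrak{gl}(d,\mathbb{C})$.

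For the second claim I would pass to the compact real form. Working over $\mathbb{R}$ with generators $i\mathfrak{d}_{\mathbb{R}}(d)$ and $iH_P$, note first that everything stays inside $\mathfrak{u}(d)$ since $i\mathfrak{d}_{\mathbb{R}}(d)\subset\mathfrak{u}(d)$, $iH_P\in\mathfrak{u}(d)$ (as $H_P$ is Hermitian), and $\mathfrak{u}(d)$ is a real Lie algebra. For the reverse inclusion, the real span of $i\mathfrak{d}_{\mathbb{R}}(d)$ and the brackets generated from $iH_P$ must be shown to exhaust $\mathfrak{u}(d)$; the cleanest route is to observe that $\langle i\mathfrak{d}_{\mathbb{R}}(d), iH_P\rangle_{\mathbb{R}}$ is a real form of its complexification, and the complexification is $\langle \mathfrak{d}(d), H_P\rangle = \mathfrak{gl}(d,\mathbb{C})$ by part one (using that $\mathbb{C}\otimes\mathfrak{d}_{\mathbb{R}}(d) = \mathfrak{d}(d)$ and $iH_P$ complexifies to give back $H_P$). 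Since $\mathfrak{u}(d)$ is the unique compact real form of $\mathfrak{gl}(d,\mathbb{C})$ containing $i\mathfrak{d}_{\mathbb{R}}(d)$, and our generated algebra is a real subalgebra of $\mathfrak{u}(d)$ with full complexification $\mathfrak{gl}(d,\mathbb{C})$, a dimension count ($\dim_{\mathbb{R}}\mathfrak{u}(d) = d^2 = \dim_{\mathbb{C}}\mathfrak{gl}(d,\mathbb{C})$) forces $\langle i\mathfrak{d}_{\mathbb{R}}(d), iH_P\rangle_{\mathbb{R}} = \mathfrak{u}(d)$.

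The main obstacle I anticipate is bookkeeping in the real-form step rather than anything conceptually deep: one must be careful that taking $\mathbb{R}$-linear combinations of brackets of $iH_P$ with elements of $i\mathfrak{d}_{\mathbb{R}}(d)$ genuinely reproduces, after complexification, the \emph{same} algebra as the $\mathbb{C}$-linear story — in particular that no factors of $i$ are lost and that the generated real algebra is exactly the fixed-point set of the conjugation $X\mapsto -X^\dagger$ on $\mathfrak{gl}(d,\mathbb{C})$. Once that identification is set up, the compact-real-form uniqueness and the dimension count close the argument. The first claim, by contrast, is essentially the path-induction with $[E_{ij},E_{jk}]=E_{ik}$ and should be routine modulo Lemma \ref{gl}.
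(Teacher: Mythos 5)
Your proposal is correct and follows essentially the same route the paper takes: invoke Lemma \ref{gl} to obtain the root spaces $R_{ij}$ for the edges of $\mathcal{G}_{H_P}$, bootstrap to all of $\mathfrak{gl}(d,\mathbb{C})$ via $[E_{ij},E_{jk}]=E_{ik}$ along paths in the connected index graph (using Hermiticity of $H_P$ to get both $E_{ij}$ and $E_{ji}$), and then pass to $\mathfrak{u}(d)$ by complexifying and counting dimensions. The appeal to uniqueness of the compact real form is unnecessary --- the inequality $d^2=\dim_{\mathbb{C}}\mathfrak{gl}(d,\mathbb{C})\leq\dim_{\mathbb{R}}\langle i\mathfrak{d}_{\mathbb{R}}(d),iH_P\rangle_{\mathbb{R}}\leq\dim_{\mathbb{R}}\mathfrak{u}(d)=d^2$ already closes the argument --- but this is a stylistic rather than a substantive difference.
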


Since YJM-elements as well as their high order products have real diagonal entries under Young basis, we concretize $\mathfrak{d}(d)$ by $\{X_i, X_k X_l\}$. With all these preparations, we consider the subgroup $H$ defined in pure algebraic sense by alternating exponentials of $iX_k X_l$ and $iH_P$ where $H_P$ is path-connected. To verify that $H$ is a Lie group (with smooth structures \cite{Hilgert2012}), we apply another classical theorem due to Yamabe \cite{Yamabe1950,Goto1969} and conclude with:

\begin{theorem}\label{ex2}
	Restricted to any $S^\lambda$ with isomorphic copies decomposed from $V^{\otimes n}$, the subgroup generated by $X_k X_l$ with any path-connected Hamiltonian $H_P$ equals $\operatorname{U}(S^\lambda)$. Then a $S_n$-CQA ansatz is written as
	\begin{align} \label{CQA}
		\begin{aligned}
			\cdots & \exp(-i \sum_{k,l} \beta_{kl} X_k X_l ) \exp(-i\gamma H_P)  \\
			& \exp(-i \sum_{k,l} \beta'_{kl} X_k X_l) \exp(-i\gamma' H_P) \cdots,
		\end{aligned}
	\end{align}
	where we redefine $X_1$ as $I$ with which any first-order YJM-element $X_i$ can be written as $X_i X_1$.
\end{theorem}

Consider the case when $H_P$ is not path-connected. That is, $H_P$ is block diagonal (after a possible re-cording of basis elements) in $S^\lambda$. It is straightforward to check that Theorem \ref{ex2} still holds within each sub-block of $H_P$. Suppose our task is to find the lowest eigenstate $\ket{v_0}$ of $H_P$ within $S^\lambda$. There is generally no prior knowledge about which sub-block $v_0$ is in. The brute-force way to find the minimum is by taking a collection of initial states from each of these sub-blocks and applying the theorem repeatedly. One way to do this is by implementing the efficient QST which gives us access to all Young basis elements. The state initialization proposed in Section \ref{s4.B} with constant-depth may take a hit (forcing the depth of the circuit to increase) if the problem Hamiltonian is not path-connected. 


\subsection{Universality of QAQA}\label{s3.B}

The first proof of the universality of the QAOA ans{\"a}tze was given in \cite{Lloyd_2018}, where the authors considered problem Hamiltonian of the first-order and second-order nearest-neighbor interaction. \cite{Morales_2020} subsequently generalized the result to broader families of ans{\"a}tze defined by sets of graphs and hyper-graphs. We now describe a new proof based on the techniques developed in this paper that covers novel, broader family of QAOA ans{\"a}tze. More precisely, we change to the computational basis $\{\ket{e_i}\}_{i=1}^{2^n}$ in which all tensor products $\tilde{Z}_{r_1...r_s} \vcentcolon = Z_{r_1} \otimes \cdots \otimes Z_{r_s}$ of Pauli basis can span any diagonal matrix. In the language of Lie algebra, $Z_i$ generates the Cartan subalgebra $\mathfrak{d}(2^n)$ of $\mathfrak{gl}(2^n,\mathbb{C})$ (comparing with the case of $\operatorname{GZ}_n = \mathfrak{d}(S^\lambda)$ under Young basis). Let $H_X$ be the uniform summation of Pauli $X$ operators (we do not write its explicit form to avoid any confusion with the notation of YJM-elements). The Hamiltonian $H_X$ is path-connected under $\{\ket{e_i}\}$. To restrict the using of high order Pauli $Z$ operators analogously as we did for YJM-elements, we prove in Lemma 5.1 that the Hamiltonian $H_Z$ composed by $\{Z_i, Z_k Z_l\}$ are enough to establish Lemma \ref{gl} \& \ref{u} with $H_X$ in the present setting. Unlike the $H_Z$ used in \cite{Lloyd_2018} which contains only nearest neighbor terms $Z_j Z_{j+1}$, we take all second-order products $Z_k Z_l$ in our proof. The resulting Hamiltonian $H_Z$ is still simple though and the proof works for both odd and even number of qubits  \cite{Morales_2020}. Moreover, replacing $H_X$ by any other path-connected Hamiltonian, e.g., an unfrustrated Heisenberg Hamiltonian with boundary condition \cite{Tasaki2020}, still guarantees the universality, and this fact enables one to experiment with a wide range of mixer Hamiltonians. In summary,

\begin{theorem}\label{QAOA-Univ}
	Let $H_X$ be any path-connected Hamiltonian on computational basis, the group generated by the QAOA-ansatz with $H_X, H_Z = \sum \beta_{kl} Z_k Z_l$ equals $\text{U}(2^n)$, i.e., it is universal.  
\end{theorem}


\subsection{Four-Locality of $\operatorname{SU}(d)$-symmetric Quantum Circuits}\label{s3.C}

A well-known result in~\cite{Vlasov2001} states that any quantum circuit can be generated by 2-local unitaries for qubits as well as for qudits. It has been shown in a recent work \cite{MarvianSUd} that this statement fails to hold when we impose the $\operatorname{SU}(d)$ symmetry on qudits system with $d \geq 3$. Let $\mathcal{V}_k$ denote the subgroup generated by $k$-local $\operatorname{SU}(d)$-invariant unitaries, so $\mathcal{V}_2 \neq \mathcal{V}_n$, where $\mathcal{V}_n$ stands for all the irrep blocks from Fig. \ref{SchurWeyl} $(b)$. On the other hand, we use $\operatorname{U}(S^\lambda)$ in Theorem \ref{ex2} which specifies one (with equivalent copies) of them in searching ground state of Heisenberg Hamiltonian in Section \ref{s5}).   

Counting all inequivalent $S^\lambda$ is an interesting problem of its own, especially when studying the subgroups $\mathcal{V}_k \subset \operatorname{U}(d^n)$ induced by symmetry, but it would cause a \emph{phase factor problem}: one may not be able to manipulate relative phase factors of unitaries generated in inequivalent $S^\lambda$ arbitrarily. We could simply ignore these phase factors as they make no difference in measurements respecting the symmetry. Then we consider $\mathcal{SV}_k \subset \mathcal{V}_k$ restricted to $\operatorname{SU}(S^\lambda)$ for all $S^\lambda$ decomposed from $V^{\otimes n}$. It is shown in \cite{Marin1,Marin2,MarvianNature,MarvianSU2} that $\mathcal{SV}_2 = \mathcal{SV}_n$ when $d = 2$. However, \cite{Marin1,Marin2} prove in a pure math flavor by Brauer algebra from representation theory that the statement fails when $d \geq 3$ and \cite{MarvianSUd} shows by constructing an counterexample based on the qudit-fermion correspondence that $\mathcal{V}_2$ is not even a 2-design. That is, the distribution of unitaries generated by random 2-local unitaries cannot converge to the Haar measure of $\mathcal{V}_n$. With results about CQA developed above, we prove the following theorem:

\begin{theorem}\label{4-local}
	Ignoring phase factors, $\operatorname{SU}(d)$-invariant quantum circuits can be generated by 4-local $\operatorname{SU}(d)$-invariant unitaries for any $d \geq 2$. Using group-theoretical notation, $\mathcal{SV}_4 = \mathcal{SV}_n \subset \operatorname{CQA}$.
\end{theorem}

We sketch the proof strategy and leave the details into the SM: We define the subgroup generated by CQA, still denoted by CQA for simplicity, by a 2-local path-connected Hamiltonian $H_S = \sum_{i = 1}^{n-1} (i,i+1)$ mentioned in Section \ref{s3.A}. Since second order YJM-elements are at most $4$-local, one can intuitively conclude that $\mathcal{SV}_4 = \mathcal{SV}_n$. Moreover, in contrast to Theorem \ref{ex2} which addresses the universality restricted to one fixed $S^\lambda$, we now consider all inequivalent $S^\lambda$ from $V^{\otimes n}$ and our method only handles the problem when ignoring phase factors, Thus we claim that $\mathcal{SV}_4 = \mathcal{SV}_n$ and they are all included in CQA because CQA contains generators with nontrivial phases: e.g., $e^{i\theta I}$. We discuss more details about the phase factor problem by $S_n$ representation theory and show that $\operatorname{CQA}$ is a compact subgroup of $\mathcal{V}_4 \subsetneqq \mathcal{V}_n$ generally in the SM.


\section{Correspondence between Spin Labels and Content Vectors}\label{s4}

As introduced in Section \ref{s2}, Young basis vectors are labeled by content vectors via YJM-elements. A similar phenomenon is also seen for the $\operatorname{SU}(d)$ irrep basis vectors constructed by Clebsch-Gordan decompositions \cite{harrow2005PhD, pqc, sergii1} under which the space decomposes as Fig.\ref{SchurWeyl} $(a)$: they are labeled by $d-1$ Casimir operators \cite{Biedenharn1}. We now turn to the question whether these two labeling schemes are equivalent in a certain sense explained in the following. This was conjectured to be true in \cite{harrow2005PhD} and surfaced again in \cite{krovi} when the author introduced an efficient Quantum Schur Transform (QST). An affirmative answer to this conjecture is crucial in this work for three reasons: $(a)$ The Young basis is algebraic. Thus, the gate action drawing from the group algebra $\mathbb{C}[S_n]$ is basis-independent. In particular, it can be implemented directly in the computational basis without computing the Fourier coefficients -- this is a key observation that underpins the super-exponential quantum speed-up. $(b)$ This identification allows us to apply both classical tools from $\operatorname{SU}(d)$ representation theory as well as Okounkov-Vershik approach to Schur basis no matter how it is established. As an example, we show in Section \ref{s4.B} an efficient algorithm to generate Schur basis states required for optimization and learning tasks. $(c)$ A detailed examination on Schur basis enables us covert all the previous results about $S_n$-CQA to, what we call, $\operatorname{SU}(d)$-CQA with $S_n$ symmetry.  

For two-row Young diagrams, this conjecture was shown to be correct in \cite{Pauncz2018}, where the author studied the question by $\frac{1}{2}$-spin eigenfunctions instead of YJM-elements. The general case for $\operatorname{SU}(d)-S_n$ duality still holds and can be proven in a surprisingly easy way using YJM-elements and the Okounkov-Vershik approach. We present details in the SM.

\begin{lemma}\label{J_k^2-X_i}
	Under $\operatorname{SU}(d)-S_n$ duality, sequentially coupled Casimir operators commute with YJM-elements.
\end{lemma}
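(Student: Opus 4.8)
The plan is to show that each sequentially coupled Casimir operator is, up to an additive scalar, the image under $\pi_{S_n}$ of a central element of some $\mathbb{C}[S_k]$, and hence lies in the (commutative) Gelfand--Tsetlin subalgebra $\operatorname{GZ}_n$, which already contains every YJM element. Recall that the sequentially coupled Casimir operator $C_k$ is the quadratic Casimir of $\mathfrak{su}(d)$ embedded diagonally into the first $k$ tensor factors of $V^{\otimes n}$: fixing an orthonormal basis $\{T^a\}_{a=1}^{d^2-1}$ of $\mathfrak{su}(d)$ with $\Tr(T^aT^b)=\tfrac12\delta^{ab}$ and writing $T^a_{(m)}$ for $T^a$ acting on the $m$-th factor, we have $C_k=\sum_{a}\big(\sum_{m=1}^{k}T^a_{(m)}\big)^2$. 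The $\operatorname{SU}(2)$ total-spin operators $\vec J_{(k)}^{\,2}$ are the special case $d=2$.

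First I would expand the square. The diagonal terms $\sum_a (T^a_{(m)})^2$ are each a scalar (the Casimir eigenvalue of the fundamental), and the completeness relation $\sum_a (T^a)_{ij}(T^a)_{kl}=\tfrac12(\delta_{il}\delta_{jk}-\tfrac1d\delta_{ij}\delta_{kl})$ identifies the two-site cross term as $\sum_a T^a_{(m)}T^a_{(m')}=\tfrac12 P_{mm'}-\tfrac1{2d}I$, where $P_{mm'}$ is the swap of the $m$-th and $m'$-th factors. Hence
\[
C_k \;=\; c(d,k)\,I \;+\; \sum_{1\le m<m'\le k} P_{mm'}
\]
for an explicit constant $c(d,k)$. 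Now $P_{mm'}=\pi_{S_n}\big((m,m')\big)$, so the sum of swaps is $\pi_{S_n}$ applied to the sum of all transpositions in $S_k$, which by definition of the YJM elements is exactly $X_2+X_3+\cdots+X_k$. Thus $C_k = c(d,k)\,I + \pi_{S_n}(X_2+\cdots+X_k)$, and since $X_2+\cdots+X_k$ is the class sum of transpositions it lies in $Z[\mathbb{C}[S_k]]\subset \operatorname{GZ}_n$.

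To finish, I would invoke the Okounkov--Vershik fact that all YJM elements pairwise commute: for any $1\le i\le n$ and any $k$, $[\,C_k,\pi_{S_n}(X_i)\,]=\sum_{j=2}^{k}\pi_{S_n}\big([X_j,X_i]\big)=0$, which is the claim. As a bonus, since $C_k\in\pi_{S_n}(\operatorname{GZ}_n)$ is diagonal in the Young basis, its eigenvalues are immediately readable from the content vectors, which is precisely the correspondence between spin labels and content vectors that Section~\ref{s2.B} is after. I expect the only delicate point to be the bookkeeping in the displayed identity --- getting the normalization of the $\mathfrak{su}(d)$ Casimir consistent with the swap operator, and checking that passing from $\mathfrak{gl}(d)$ to $\mathfrak{su}(d)$ only changes $c(d,k)$ (the coupled trace generator $\sum_i E_{ii}$ acts on the first $k$ factors of $V^{\otimes n}$ as the scalar $k$). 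One should also note that ``sequential coupling'' is a red herring for this computation: regardless of the order in which the Clebsch--Gordan decompositions of the first $k$ factors are carried out, the total Casimir of the coupled $\mathfrak{su}(d)$ is the same operator $C_k$.
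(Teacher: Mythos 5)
Your proof is correct and is essentially the paper's own argument: the coupled quadratic Casimir $C_k$ equals a scalar plus the class sum of transpositions $X_2+\cdots+X_k$ (via the swap identity, Eq.~\eqref{trans}, in its $\operatorname{SU}(d)$ form), which is central in $\mathbb{C}[S_k]$ and hence lies, together with every YJM element, in the commutative Gelfand--Tsetlin algebra $\operatorname{GZ}_n$. The one point to add for $d>2$ is that the lemma's ``Casimir operators'' also include the higher-degree Casimirs needed to label $\operatorname{SU}(d)$-irreps; these are covered by the same reasoning, since by Schur--Weyl duality (double commutant) the image of the center of $U(\mathfrak{su}(d))$ acting diagonally on the first $k$ factors lands in $\pi\bigl(Z[\mathbb{C}[S_k]]\bigr)\subset\pi(\operatorname{GZ}_n)$, so the commutativity with all $X_i$ follows verbatim.
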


As a brief illustration of this result, let us consider the \textit{sequentially coupled total spin basis} $\ket{j_1,...,j_n; m}$ of $\operatorname{SU}(2)$. The \textit{spin component} $m$ and \textit{spin labels} $j_k$ are determined by spin operator $S_z^n$ as the summation of all half Pauli $Z$ matrices $\frac{1}{2}Z_i$ at each $i$ site and sequential coupled Casimir operators $J_k^2 = (S_x^k)^2 + (S_y^k)^2 + (S_z^k)^2$ respectively (we abuse our language for simplicity as true eigenvalues of $J_k^2$ are $j_k(j_k+1)$). Since they commute with YJM-elements, $J_k^2 X_i \ket{j_1,..., j_n, m } = X_i J_k^2 \ket{j_1,...,j_n; m}$. It is well-known from linear algebra that commutative operators can be simultaneously diagonalized and we elaborate this fact with the following theorem:

\begin{theorem}\label{Spin-Content}
	YJM-elements are strictly diagonal under the $\operatorname{SU}(d)$ irrep basis built by sequential Clebsch-Gordan decompositions. Conversely, sequentially coupled Casimir operators are strictly diagonal under the Young basis decomposed by branching rule. 
\end{theorem}

We now illustrate by examples the correspondence between spin labels and content vectors for the simplest $\operatorname{SU}(2)-S_n$ duality, then we go to general case. Let $\ket{j_1,...,j_n;m}$ be any $\operatorname{SU}(2)$ irrep basis vector. Theorem \ref{Spin-Content} says that it is also a Young basis element, thus we can talk about its eigenvalues (content vector) $(\alpha_T(1),...,\alpha_T(n))$ with respect to the YJM-elements (recall that $T$ denotes a denotes a standard Young tableau, or equivalently the corresponding GZ-path from the Bratteli diagram like that from Fig.\ref{Bratteli} \& \ref{Path}). An \emph{equivalence} between two labeling schemes means the spin label $J = \{j_1,...,j_n\}$ uniquely determines the content vector $\alpha_T = (\alpha_T(1),...,\alpha_T(n))$ and vice versa. 

\noindent \emph{$\operatorname{SU}(2)$ case:} For brevity, let us denote $\operatorname{SU}(2)$ irrep basis vectors by $\ket{J; m}$. It is possible to find two basis elements $\ket{J; m}, \ket{J'; m}$ with the same spin component $m$ but different spin labels. This is due to the fact that $(\mathbb{C}^2)^{\otimes n}$ would decompose into copies of isomorphic $\operatorname{SU}(2)$ irreps (Fig.\ref{SchurWeyl} $(a)$). On the other hand, a Young basis element is then denoted by $\ket{\alpha_T;\mu}$ where $\mu$, as explained in Section \ref{s2}, comes with choosing the permutation module $M^\mu$. It is also possible to find two basis elements $\ket{\alpha_T;\mu}, \ket{\alpha_T;\mu'}$ with the same content vector $\alpha_T$ but from different permutation modules $M^\mu$ (Fig.\ref{SchurWeyl} $(c)$). Let us forget the problem of copies or multiplicities for a while and only focus on the correspondence between $J$ and $\alpha_T$. We would come back to discus then in Section \ref{s4.A}. Let $\ket{\alpha_T;\mu}$ be a Young basis element such that $\alpha_T$ equals $(0,-1,1,2,0,3)$ in Fig.\ref{Path}. It is also an $\operatorname{SU}(2)$ irrep basis vector. Acted on by $J_1^2$, the first spin label is definitely $j_1 = \frac{1}{2}$. To measure the second spin label, let us apply Schur-Weyl duality to the subset system consisting of only the first two qubits. Since $\ket{\alpha_T;\mu}$ is constructed by branching rule, it can be seen as a Young basis element of $S_2$ irreps of Young diagram $\lambda = = (\lambda_1, \lambda_2) = (1,1)$ (read off from the first two elements of $\alpha_T$). Schur-Weyl duality says that $\ket{\alpha_T;\mu}$ should stay in the $\operatorname{SU}(2)$ irrep denoted by the same Young diagram, hence $j_2 = \frac{1}{2}(\lambda_1 - \lambda_2) = 0$. Inductively, $j_3 = \frac{1}{2}$ and we obtain $J = (\frac{1}{2}, 0, \frac{1}{2}, 1, \frac{1}{2}, 1)$. As a brief comment on multiplicities, since the total spin (last spin label) is $1$, there are three possible choices of $z$-spin components $m = 1,0,-1$. Correspondingly, $\mu$ from $\ket{\alpha_T;\mu}$ can be three different permutation modules (Fig.\ref{SchurWeyl} $(c)$). The mechanism for reading off content vectors from spin labels is simialr.

\noindent \emph{$\operatorname{SU}(3)$ case:} Note that the pattern of constructing $\operatorname{SU}(2)$ spin labels is simply the familiar branching rule seen in $\operatorname{SU}(2)$-irreps \cite{sergii1, sergii2}. We now instantiate with $d = 3$ to show how this pattern generalizes. Let $\boldsymbol{0},\boldsymbol{3},\overline{\boldsymbol{3}},\boldsymbol{8}$ denote the trivial, the fundamental, the conjugate and the adjoint representations of $\operatorname{SU}(3)$. Then we consider the following coupling scheme:
\begin{align*}
	\boldsymbol{3} \otimes \boldsymbol{3} = \overline{\boldsymbol{3}} \oplus \boldsymbol{6}  \, \Rightarrow \,
	\overline{\boldsymbol{3}} \otimes \boldsymbol{3} = \boldsymbol{0} \otimes \boldsymbol{8} \, \Rightarrow \,
	\boldsymbol{0} \otimes \boldsymbol{3} = \boldsymbol{3},
\end{align*}
where we coupled 4 qudits in which we take the GZ-path corresponding to $\alpha_T = (0,-1,-2,1)$ and ended up with the Young diagram $\lambda = (2,1,1)$. 

The group $\operatorname{SU}(3)$ has two Casimir operators 
\begin{align}\label{Casimir}
	C_1 = \sum_{i = 1}^8 T_i^2, \quad C_2 = \sum_{i,j,k}^8 d_{ijk} T_i T_j T_k,
\end{align}
where $T_i = \frac{1}{2}\lambda_i$ are half of the Gell-Mann matrices and $d_{ijk}$ are determined by the anti-commutation relation $\{T_i, T_j\} = \frac{1}{3} \delta_{ij} + d_{ijk} T_k$. Thus the $k$th sequential coupling of these operators, denoted by $(C_1, C_2)_k$, corresponds to the YJM-element $X_k$ and they are used to record irreps like $\boldsymbol{3}, \overline{\boldsymbol{3}}, \boldsymbol{0}, \boldsymbol{3}$ appearing in the above example and yield ``spin labels" $(1,0), (0,1), (0,0), (1,0)$ which are highest weights for $\operatorname{SU}(3)$ irreps. For a general $\operatorname{SU}(d)-S_n$ duality, each YJM-element $X_k$ corresponds to a pair of $d - 1$ Casimirs \cite{Biedenharn1} for the $k$th sequential coupling. It is therefore more concise to use YJM-elements in general case as sequential coupling and branching rule decomposition are equivalent in describing Schur basis.


\subsection{More facts about State Labeling and CQA with $S_n$ symmetry}\label{s4.A}

With spin label-content vector correspondence, we denote a Schur basis vector by $\ket{\alpha_T;\mu_S}$, where $\alpha_T$ is its content vector. In the $S_n$ picture, $\alpha_T$ tells us exactly the path to restrict an $S_n$ irrep determined by the Young diagram $T$ to $S_{n-1}$ irrep and so forth. However, there are copies of that $S_n$ irrep decomposed from the entire Hilbert space and $\mu_S$ labels the multiplicity. One may wish to distinguish these isomorphic copies by permutation modules like Fig.\ref{SchurWeyl} $(c)$. However, as mentioned in Section \ref{s2}, when $d \geq 3$ isomorphic copies of $S_n$ irreps can even be found from the same permutation module $M^\mu$ \cite{Sagan01,Tolli2009} and hence the superscript $\mu$ is no longer enough to identify $\mu_S$. 

Interestingly, this problem can be solved in the $\operatorname{SU}(d)$ picture, in which $\alpha_T$ tells us exactly the path to couple $\operatorname{SU}(d)$ irreps sequentially. Our finial destination $W_\lambda$ is uniquely determined by the path, but we now need to \emph{label} $\ket{\alpha_T;\mu_S}$ \emph{as a state} in $W_\lambda$. When $d = 2$, $\mu_S$ is simply taken as the spin-$z$ component $m$. When $d \geq 3$, we uses classical results by Gelfand and Tsetlin \cite{Gelfand2015} and Biedenharn \cite{Biedenharn2}. We illustrate the main idea for $d = 3$: consider weight diagrams of $\operatorname{SU(3)}$ irreps on Fig~\ref{WeightDiagram}. Each dot in a diagram stands for a basis vector of the irrep \cite{Greiner1994,Goodman2009}.
\begin{figure}[H]
	\centering
	\includegraphics[width=2in]{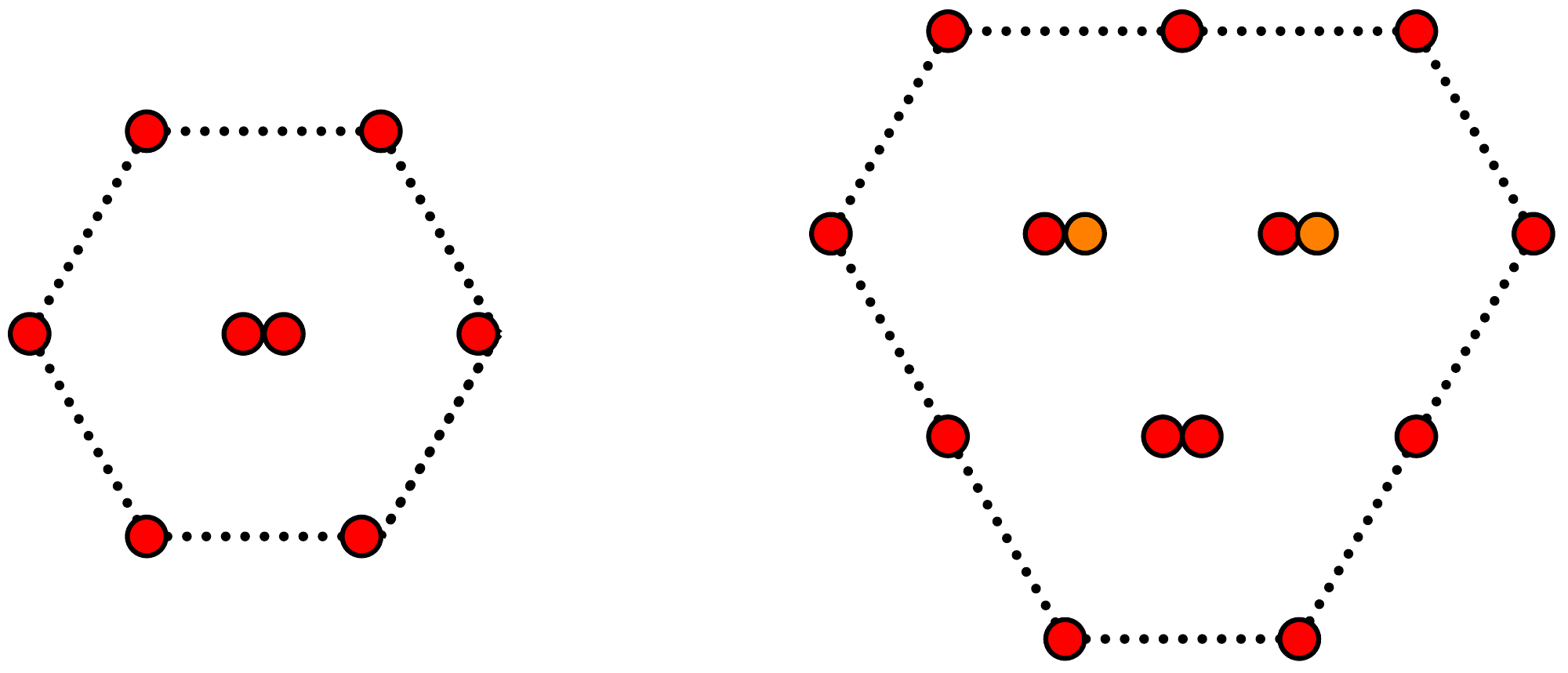}
	\caption{Typical weight diagrams of $\operatorname{SU(3)}$ irreps. Some weight vectors (dots) may occupy the same positions.}
	\label{WeightDiagram}
\end{figure}
With $\alpha_T$ being determined by YJM-elements, we only need to identify which dot $\ket{\alpha_T;\mu_S}$ corresponds to in the weight diagram of $W_\lambda$. Diagrammatically, these dots have planar coordinates which are rigorously called \emph{weights} measured by the isospin $I_3$ and hypercharge $Y$ operator of $\operatorname{SU}(3)$ \cite{Greiner1994,Goodman2009} just like measuring spin components by $S_z$ of $\operatorname{SU}(2)$. However, being different from the $\operatorname{SU}(2)$ case, some weight vectors (dots) occupy the same positions. It is known as the branching rule for $\operatorname{SU}(3)$ that dots with the same horizontal coordinate form irreps of $\operatorname{SU}(2) \subset \operatorname{SU}(3)$. For instance, two brown dots in Fig.\ref{WeightDiagram} form a spin-$1/2$ irrep while the other four red dots from the same horizontal line form a spin-$3/2$ irrep. Thus after measuring weights/positions by $I_3, Y$, we simply need to apply the $\operatorname{SU}(2)$ Casimir operator $J^2$ to discern dots occupying the same position.

In~\cite{Biedenharn2}, authors provide the recipe to find these operators for a general $\operatorname{SU}(d)$ group. Roughly speaking, we first employ $d - 1$ operators $\mu_i$, which span its Cartan subalgebra, to label the weights of a given basis vector. Then we take Casimir operators of $\operatorname{SU}(2) \subset \cdots \subset \operatorname{SU}(d-1)$ to distinguish  dots which occupy the same position. Their eigenvalues, which we record as $\mu_S$, form the so-called \emph{Gelfand-Tsetlin pattern} \cite{Gelfand2015} which is widely used to study $\operatorname{SU}(d)$ irreps. To construct a CQA model with $S_n$ symmetry, we replace the YJM-elements labeling any $S^\lambda$ basis states by the the following ones labeling any $W_\lambda$ basis states:
\begin{align*}
	\mu_1,...,\mu_{d-1}, C_1^{\operatorname{SU}(d-1)},...,C_{d-2}^{\operatorname{SU}(d-1)},...,C_1^{\operatorname{SU}(2)} = J^2.
\end{align*}
The number of required operators to build an $\operatorname{SU}(d)$-CQA ansatz is $\frac{1}{2}d(d - 1)$ -- a constant for fixed $d$, no matter how many qudits the system contains. However, Casimir operators of $\operatorname{SU}(d-1)$ are supported on $d-1$ qudits (see Eq.\eqref{Casimir} and \cite{Greiner1994}). Applying the same proof method from Theorem \ref{ex2}, $\operatorname{SU}(d)$-CQA is thus made of $2(d-1)$-local $S_n$-symmetric unitaries. As a simple example, $\operatorname{SU}(2)$ irreps basis states are uniquely determined by the summation $S_z^n$ of spin operator $\frac{1}{2}Z_i$ on each site $i$. Without having to employ Casimir operator, $S_z^n, (S_z^n)^2$ and a problem Hamiltonian $H_P$ already form an $\operatorname{SU}(2)$-CQA model. More precisely,

\begin{corollary}
	Let $H_P$ be any path-connected Hamiltonian on the $\operatorname{SU}(2)$ irreps basis basis, the CQA-ansatz generated by $H_P, S_z^n, (S_z^n)^2$ is dense in each $\operatorname{SU}(2)$-irrep block $\operatorname{U}(\dim W_\lambda)$.
\end{corollary}


\subsection{State Preparation for $S_n$-CQA Ans{\"a}tze}\label{s4.B}

To investigate evaluation of the matrix elements of $S_n$ Fourier coefficients, we were confined to the Young basis, which requires the implementation of Quantum Schur Transform \cite{harrow2005PhD,Harrow06,krovi}. However, for a wide variety of quantum machine learning and optimization tasks, such as determining the ground state sign structure of frustrated magnets, it is often advantageous to relax the constraints and ask how easy it is to initialize the states that live in any given $S_n$-irrep. An algorithm to initialize a state in the $S_n$-irrep with Young diagram being $(\frac{n}{2}, \frac{n}{2})$, is given in~\cite{Seki_2020}. We generalize this result to an arbitrary $S_n$-irrep in general $\operatorname{SU}(d)-S_n$ duality. The key is to utilize different permutation modules and multiplicities of $S_n$-irreps as in Fig.\ref{SchurWeyl} $(c)$. Simialr to the previous subsection, we construct the algorithm inductively: we first consider the $\operatorname{SU}(2)-S_n$ duality in which a $(\lambda_1, \lambda_2)$-$S_n$-irrep is dual to a spin-$(\lambda_1 - \lambda_2)/2$ irrep. Let
\begin{align}\label{psi-init}
	\ket{\Psi_{\operatorname{init}}} = \ket{\underbrace{0\cdots0}_{k \operatorname{many}}} \otimes \underbrace{ \ket{s} \otimes \cdots \otimes \ket{s}}_{\frac{n-k}{2} \operatorname{many}},
\end{align}
where $\ket{s} = \frac{1}{\sqrt{2}}(\ket{01} - \ket{10})$ is one of the Bell states and we assume $n-k$ is even. Then we have:

\begin{lemma}\label{YoungCombination}
	Let $\mu = \lambda = (\frac{n+k}{2}, \frac{n-k}{2})$. The initialized state $\ket{\Psi_{\operatorname{init}}}$ is contained in $S^\lambda$ and belongs to the permutation module $M^\mu$.
	\begin{proof}
		Acting by the spin operator $S^z = \sum_i S^z_i$, it is easy to check that the spin component of $\ket{\Psi_{\operatorname{init}}}$ equals $j = k/2$ hence it belongs to $M^\mu$. By Theorem \ref{Spin-Content} and discussion from the previous subsection, we have the expansion $\ket{\Psi_{\operatorname{init}}} = \sum_T c_T \ket{\alpha_T; k/2}$. Since by definition $J_+ \ket{\Psi_{\operatorname{init}}} = 0$, the Young diagram underlying each $\alpha_T$ from the summation must be the same and equals $\lambda$.
	\end{proof}
\end{lemma}

We now illustrate by several examples how to expand $\ket{\Psi_{\operatorname{init}}}$ as a linear combination of Young basis elements: $(a)$ let $\ket{\Psi_{\operatorname{init}}} = \ket{s}^{\otimes \frac{n}{2}}$. This is the state used in \cite{Seki_2020}. One can check by YJM-elements that it is a single Young basis element. $(b)$ for a more involved case, consider the $(4,2)$-irrep of $S_6$ and write: 
\begin{align*}
	& \ket{\Psi_{\operatorname{init}}} = \ket{00} \otimes \ket{s} \otimes \ket{s} \\
	= & \frac{2}{3}\ket{\alpha_{T_1}; 1} - \frac{\sqrt{2}}{3}\ket{\alpha_{T_2}; 1} - \frac{\sqrt{2}}{3}\ket{\alpha_{T_3}; 1}  + \frac{1}{3} \ket{\alpha_{T_4}; 1},
\end{align*}
where $\alpha_{T_i}$ corresponds to GZ-paths in Fig. \ref{Initialization}.

\begin{figure}[H]
	\centering
	\includegraphics[width=3.5in]{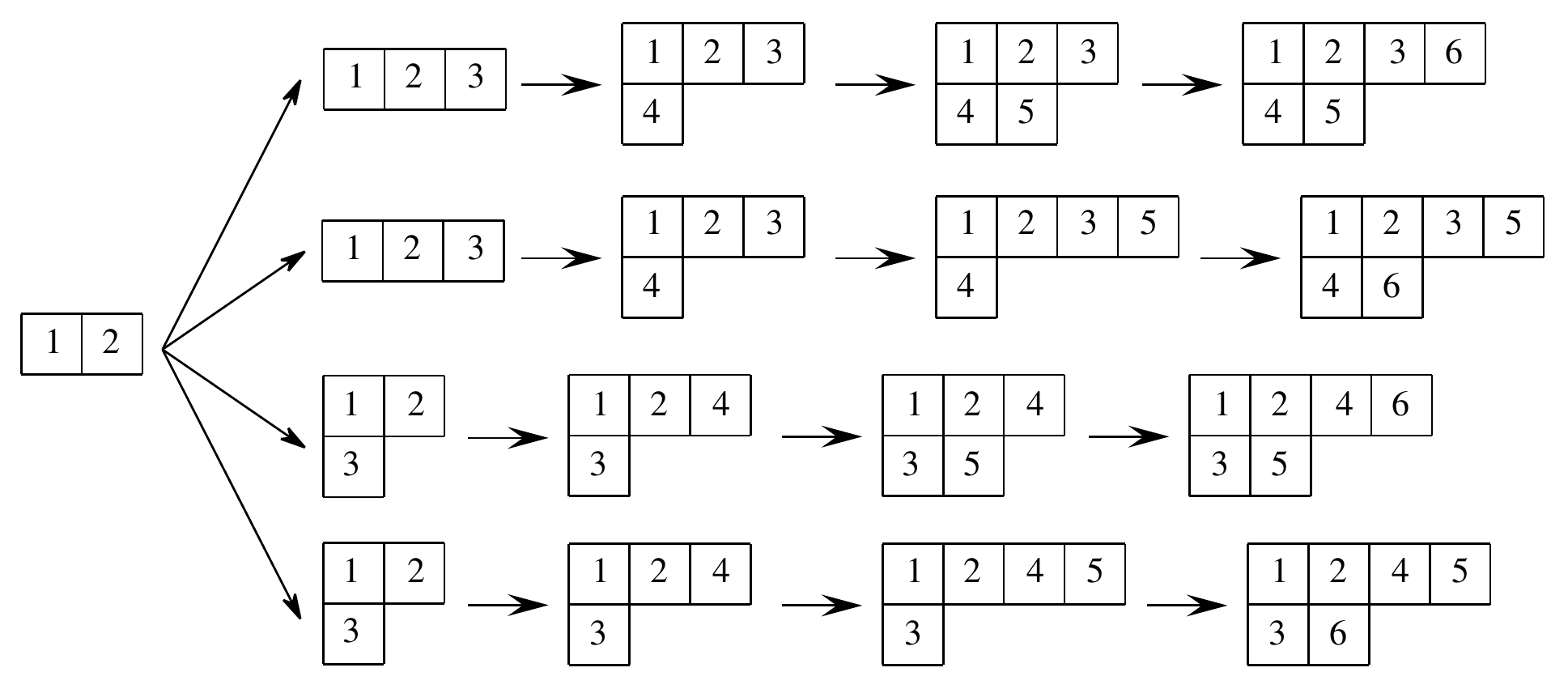}
	\caption{Decomposing the initial state by content vectors/spin labels.}
	\label{Initialization}
\end{figure}

The first two boxes in Fig.\ref{Initialization} corresponds to trivial irrep of $S_2$ acting on the subsystem formed by the first two qubits. Indeed, $S_2$ acts trivially on $\ket{00}$ from $\ket{\Psi_{\operatorname{init}}}$. As it would be more apparent to see how to get $\alpha_{T_i}$ in $\operatorname{SU}(2)$ picture, we use the spin label-content vector duality and trace the path of spin coupling. As the total spin of $\ket{00}$ and $\ket{00} \otimes \ket{s}$ are identical (Lemma \ref{YoungCombination}), there are two ways to add two more boxes: putting the third box on the RHS of the first two and then putting the forth on the bottom or conversely. Tensoring again with the singlet $\ket{s}$, we retrieve four branching paths in total. Moreover, by the same reason, it is easy to see that re-ordering tensor products of $\ket{0\cdots0}$ and $\ket{s}$ in Eq.\eqref{psi-init} yields a different expansion of Young basis elements for the same $S_n$ irreps. Fig.\ref{Initialization2} illustrates two more cases: $\ket{s} \otimes \ket{0} \otimes \ket{s} \otimes \ket{0}$ and $\ket{s} \otimes \ket{00} \otimes \ket{s}$. 

\begin{figure}[H]
	\centering
	\includegraphics[width=3in]{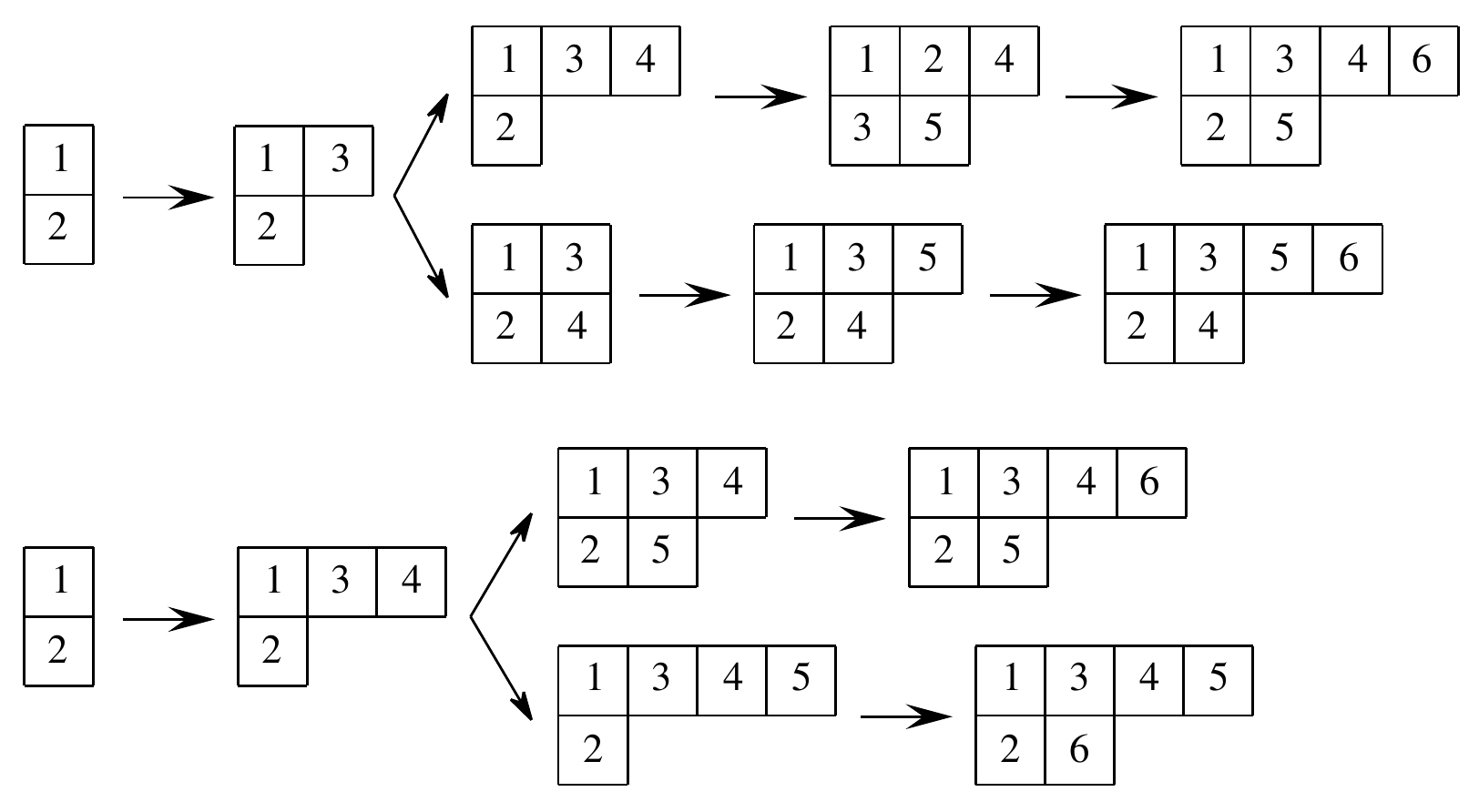}
	\caption{Reordering tensor products yields different Young basis expansions.}
	\label{Initialization2}
\end{figure}

This method can be generalized to $\operatorname{SU}(d)-S_n$ duality. For instance, when $d = 3$ to initialize states for three-row Young diagrams, let us consider the upper, down, strange states $u,d,s$ of $\boldsymbol{3}$. Let
\begin{align*}
	\ket{\Psi_0} = \ket{\underbrace{u \cdots u}_{k \operatorname{many}}} \otimes \underbrace{ (ud - du) \otimes \cdots \otimes (ud - du) }_{\frac{n-k}{2} \operatorname{many}}.
\end{align*}
This state lies in the $(n+k, n-k, 0)$-irrep. Tensoring with $SU(3)$-singlet $\ket{s} = (uds - usd + dsu - dus + sud - sdu)/\sqrt{6}$, $\ket{\Psi_{\operatorname{init}}} = \ket{\Psi_0} \otimes \ket{s}^{\otimes l}$ is a state from the $(n+k+l, n-k+l, l)$-irrep. Its expansion can still be tracked by the branching rule as in Fig. \ref{Initialization} \& \ref{Initialization2}. An $S_6$-CQA quantum circuit with state initialization described above can be seen in Fig.\ref{Circuit}.

\begin{figure*}[ht]
	\centering
	\includegraphics[width=0.9\textwidth]{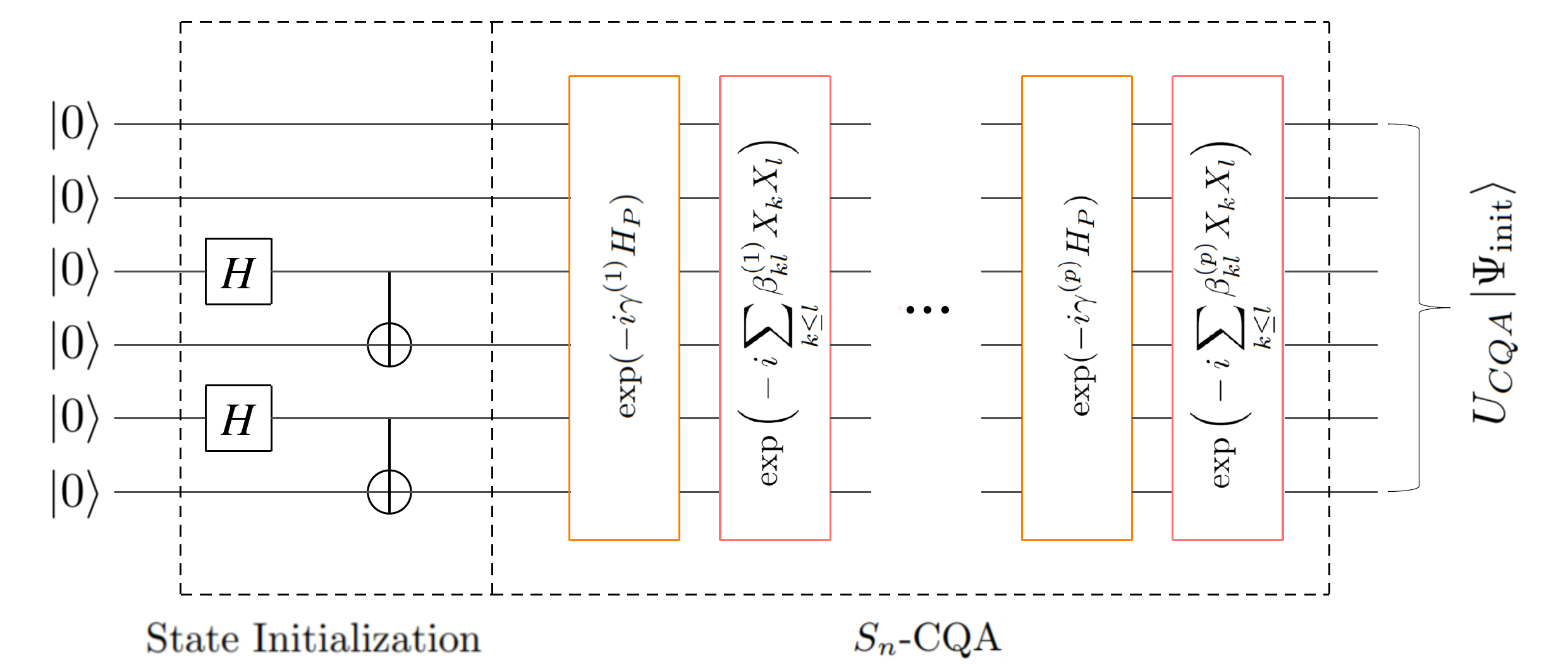}
	\caption{$S_6$-CQA circuit with state initialization in Fig.\ref{Initialization}.}
	\label{Circuit}
\end{figure*}


\subsection{Quantum Super-Polynomial Speedup}

For variational algorithms, typically one would make many measurements with updated parameters $\{\theta_\mu\}$ by some classical gradient descent scheme: 
\begin{align}
	\theta_\mu(t+1)=\theta_\mu(t)-\sum_\nu \eta_\mu(t) A_{\mu \nu}^{-1}(\theta(t)) \frac{\partial}{\partial \theta_\nu}\langle H\rangle_{\theta(t)},
\end{align}
where the learning rate tensor $A_{\mu \nu}(\theta(t))$ is often taken as identity matrix while $\eta_\mu = \eta$ is the learning rate. The quantity $\langle H \rangle_{\theta(t)}$ is the expectation value to minimize and $\frac{\partial}{\partial \theta_\nu}\langle H\rangle_{\theta(t)}$ is the derivative with respect to $\theta_{\nu}$. With an explicitly parameterized unitaries such as in our case, we can utilize the quantum circuits to measure the gradient of the expectation. 

Here, we refer taking one measurement at time $t$ as a \emph{query} and \emph{query complexity} as the total time $T$ in order to converge. The query complexity can be analyzed by recent development of the \emph{quantum neural tangent kernel} \cite{liu2022analytic}. Though it would be an interesting case to consider the bound on the query complexity to converge, in this work we only focus on showing that the circuit complexity \emph{per query} can be efficiently simulated on quantum circuits while this is not known in classical regime. 

\begin{theorem}
	Let $U^{(p)}_{\operatorname{CQA}}$ denote the CQA ans\"atze with $p$ alternating layers and let $H \in \mathbb{C}[S_n]$ be a SU($d$)-symmetric $k$-local Hamiltonian with most $N$ terms. Then for any $S_n$ irrep $S^\lambda$, 
	the Fourier coefficients: 
	\begin{align}
		\frac{\partial}{\partial \theta_{\mu}} \bra{\alpha_{T'}, \mu_S} U^{(p)\dagger}_{\operatorname{CQA}}(\boldsymbol{\theta}) H U^{(p)}_{\operatorname{CQA}}(\boldsymbol{\theta}) \ket{\alpha_T, \mu_S},
	\end{align}
	where $T,T'$ are standard tableaux of $\lambda$ (Fig,\ref{Bratteli} \& \ref{Path}) and $\mu_S$ records the multiplicity of $S^\lambda$ (Section \ref{s4.A}), can be simulated in $O(pN(\theta n^4 + k^2))$ with $\theta$ being the largest absolute values of parameters.
\end{theorem}

The proof is also put in the SM. Precisely, we assume that there exists an efficient Schur transform (QST) \cite{harrow2005PhD, krovi, kirby} with a polynomial overhead to prepare $\ket{\alpha_T,\mu_S}$. Calculating the Fourier coefficients over $S_n$ is a classically difficult question and the best classical algorithms $S_n$-FFT requires a factorial complexity \cite{Clausen_1993,Maslen_1998} as $S_n$ has $n!$ group elements and so is the dimension of its regular representation (see Wedderburn Theorem in the SM for more details). Therefore, comparing with the complexity of $S_n$-FFT, there is an super-exponential quantum speed per query. However as a caveat, the entire Hilbert space of $n$-qudits only scales exponentially with $n$ and $S_n$ irreps decomposed from the system by Schur-Weyl duality also scales exponentially. Therefore, it would be more reasonable and cautious to refers to a \emph{super-polynomial} quantum speed-up for $S_n$-CQA. 

Except comparing with $S_n$-FFT, recent work from \cite{tang2021quantum, tang2019quantum} proposes the notion of dequantization to compare the efficiencies of classical and quantum algorithms. Roughly speaking, with well-prepared quantum initial states, quantum algorithms can always be exponentially faster than the best counterpart classical algorithms. Assume classical algorithms also have efficient access to input. If the output can now be evaluated with at most polynomially larger query complexity then the quantum analogy, it is said to be dequantized with no genuine quantum speed-up. In our case, let us assume our initial states -- Schur basis elements $\ket{\alpha_T, \mu_S}$ or their linear combinations can be efficiently accessed with classical methods. Even though, dequantization still unlikely happens. Except conducting $S_n$-FFT, matrix representations of $\sigma \in S_n$ can also be efficiently sampled \cite{sergii1,sergii2}, but the method works exclusively for a single group element. To sample $U^{(p)}_{\operatorname{CQA}}(\boldsymbol{\theta}) \ket{\alpha_T, \mu_S}$ processed after $S_n$-CQA from Eq.\eqref{CQA}, the time evolution of CQA Hamiltonians is expand and approximated by at least super-polynomially many $S_n$ group elements (see the SM for more details) and hence is still thought to be classically intractable. 


\section{$\mathbb{C}[S_n]$ Symmetries of $J_1$-$J_2$ Heisenberg Hamiltonian } \label{s5}

The spin-$1/2$ $J_1$-$J_2$ Heisenberg model is defined by the Hamiltonian: 
\begin{align} \label{S_ham}
	\hat{H}_p=J_{1} \sum_{\langle i j\rangle} \hat{\boldsymbol{S}}_{i} \cdot \hat{\boldsymbol{S}}_{j}+J_{2} \sum_{\langle \langle i j\rangle \rangle } \hat{\boldsymbol{S}}_{i} \cdot \hat{\boldsymbol{S}}_{j},
\end{align}
where $\hat{\boldsymbol{S}}_{i} = (\hat{S}_{i}^{x}, \hat{S}_{i}^{y}, \hat{S}_{i}^{z})$ represents the spin operators at site $i$ of the concerned lattice. The symbols $\langle\cdots\rangle$ and $\langle\langle\cdots\rangle\rangle$ indicate pairs of nearest and next-nearest neighbor sites, respectively. The $J_1-J_2$ model has been the subject of intense research over its speculated novel spin-liquid phases at frustrated region \cite{Balents_2010}. The unfrustrated regime $(J_2 = 0$ or $J_1 = 0)$ for the anti-ferromagnetic Heisenberg model is characterized by the bipartite lattices, for which the sign structures of the respective ground states are analytically given by the \textit{Marshall-Lieb-Mattis theorem} \cite{Marshall1955,Lieb62,Tasaki2020}. As an important result, ground states of unfrustrated bipartite models are proven to live in the $S_n$ irrep corresponding to the Young diagram $\lambda = (n/2,n/2)$. By Schur-Weyl duality, this subspace is often referred as the direct sum of $\operatorname{SU}(2)$ invariant subspaces with total spin $J = 0$ in the context of physics (cf. Fig.\ref{SchurWeyl} $(a), (b)$). With this fact, algorithms like \cite{VieijraRBM} has been designed to enforce $\operatorname{SU}(2)$ symmetry at $J = 0$ and solve Heisenberg models without frustration.

The system is known to be highly frustrated when $J_1$ and $J_2$ are comparable $J_2/J_1 \approx 0.5$ \cite{Bukov_2021} and near the region of two phase transitions from Neel ordering to the quantum paramagnetic phase and from quantum paramagnetic to colinear phase, where no exact solution is known. Moreover, little is known about the intermediate quantum paramagnetic phase -- recent evidence of deconfined quantum criticality \cite{Nahum_2015,Wen_2016} sparked further interest in studying these regimes. Gaining physical insights in the intermediate quantum paramagnetic phase requires solving the problem of the ground state sign structure the system approaches the phase transition. Recently, there were a number of numerical attempts to address the existence of the $\operatorname{U}(1)$ gapless spin liquid phase, using recently the tensor networks \cite{Liu_2021}, restricted Boltzmann Machine (RBM) \cite{Nomura_2021}, convolutional neural network (CNN) \cite{Choo_2019,Liang_2018,Castelnovo_2020}, and graphical neural network (GNN) \cite{Kochkov_2021} -- all yielding partial progress. As a significant difference from the unfrustrated case, Marshall-Lieb-Mattis theorem does not hold generally and there is no guarantee that the ground state still lives at $J = 0$ or equivalently $\lambda = (n/2,n/2)$, which urges us to preserve the global SU(2) symmetry, which further gives us access to search in all inequivalent $S_n$ irreps decomposed from the system by Schur-Weyl duality.  


\subsection*{Global $\operatorname{SU}(2)$ Symmetry and Challenges in NQS Ans\"{a}tze}

Taking advantage of the global $\operatorname{SU}(2)$ symmetry, we address this problem in a different way: we recast the Hamiltonian in Equation \eqref{S_ham} by the following identity
\begin{align} \label{trans}
	\pi((i \: j)) = 2\hat{\boldsymbol{S}}_{i} \cdot \hat{\boldsymbol{S}}_{j} + \frac{1}{2}I,
\end{align}
with $\hat{\boldsymbol{S}}_{i}$ being further expanded as the half of standard Pauli operators $\{X, Y, Z\}$. Eq.\eqref{trans} was first discovered by Heisenberg himself \cite{Heisenberg1928, Klein_1992} (an elementary proof can be found in SM) and more recently noted by \cite{Seki_2020} in analyzing the ground state property of 1-D Heisenberg chain. As designed by products of exponentials of SWAPs (eSWAPs), the method proposed in \cite{Seki_2020} truly preserves the global SU(2) symmetry. As a brief comparison with $S_n$-CQA, eSWAP ans\"atze are universal in relevant sectors given by the SU(2) symmetry. However, this property no longer holds for qudits with $d \geq 3$ (Section \ref{s3.C}). As eSWAPs are non-commutative operators in general, there are various ways to place them in a quantum circuit. A more suitable perspective to describe the ansatz might be sampling them as 2-local SU(2) random circuits \cite{MarvianSUd}. On the other hand, the $S_n$-CQA ansatz is designed by alternating exponentials of the problem and mixer Hamiltonians $H_P, H_M$ just like the framework of QAOA. Similar to QAOA, $S_n$-CQA at large p corresponds to a form of adiabatic evolution with global SU($d$) symmetry, which could hint a theoretically guaranteed performance as $p$ is large (see Section VII in the SM).

An immediate consequence of using Eq.\eqref{trans} is that the resulting Heisenberg Hamiltonian can be expressed in the Young basis where every $S_n$-irrep is indexed by the total spin-label $j$. Mapping to this basis can be done using the constant-depth circuit state initialization in Section \ref{s4.B}. Using our $\mathbb{C}[S_n]$ variational ansatz leads to a more efficient algorithm by polynomially reducing the space. In the NISQ application, especially between 10 to 50 qubits, we have much better scaling see Fig.\ref{fig_scale}.

\begin{figure} [!ht] 
	\includegraphics[width = 0.40\textwidth]{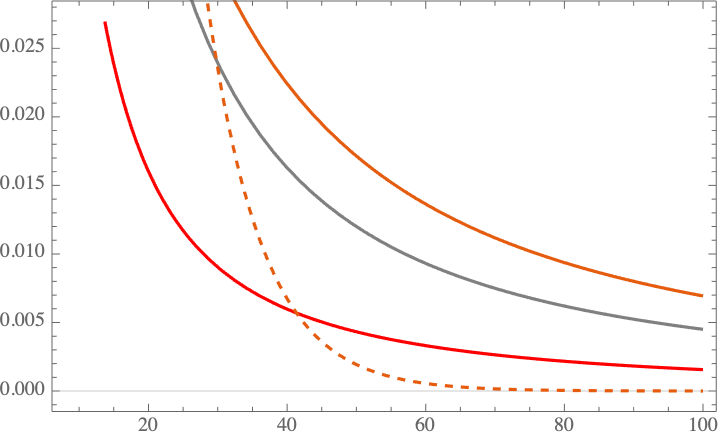} 
	\centering
	\caption{The scaling properties of small total spin irreps dimension respectively. The graph shows the scale: $2^n/ \dim S^{\lambda})$ with the partitions $\lambda_1 = (n/2, n/2) $ (red), $\lambda_2 = (n/2 +1, n/2-1)$ (grey), $\lambda_3 = (n/2 + 2, n/2 -2)$ (orange) The orange dashed line is $\exp(-n/8)$ the exponential decay. since the plot starts at $n=8$.} \label{fig_scale}
\end{figure}

Numerous efforts in applying NQS variational architecture to represent the complicated sign structure in the frustrated regime essentially use the energy as the only criterion for assessing its accuracy. This would result in the optimized low-energy variational states in frustrated regime still obeying the Marshall sign rules even though the true ground state is likely to deviate from it significantly \cite{Castelnovo_2020}, or breaks the $\operatorname{SU}(2)$ symmetry \cite{Choo_2019}.  The preservation of spatial symmetry has been the core topic of discussion in the literature, with proposed $\mathcal{C}_4$ equivariant CNN. However, on the 2D model Heisenberg model, the spatial symmetry consideration can only reduce the search space redundancy by a constant factor, thus scaling very poorly at even intermediate $n$. By reinforcing $\operatorname{SU}(2)$ symmetry, we achieve a polynomial reduction of Hilbert space and ensure the result to be physically reasonable, hence offering a second criterion to assess the variational ans\"{a}tze. 

\begin{figure*}[ht]
	\centering
	\includegraphics[width=0.9\textwidth]{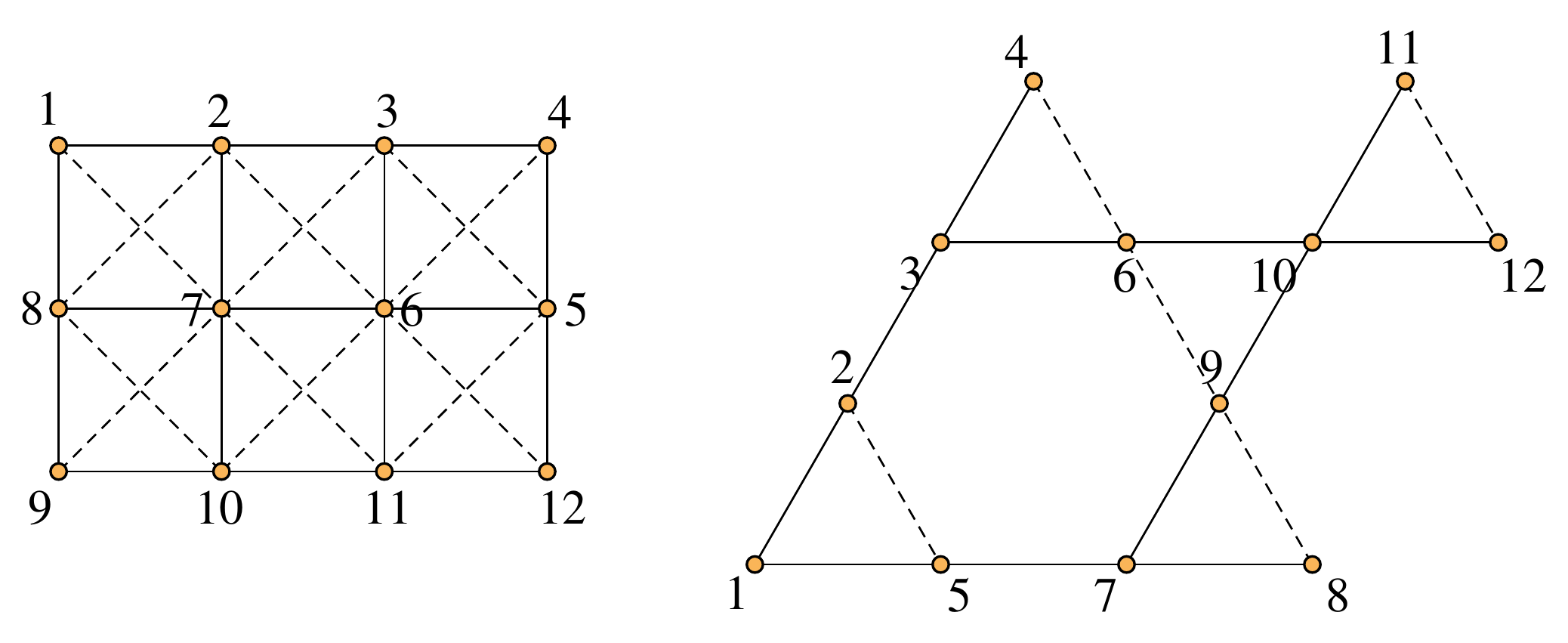}
	\caption{Square and Kagome lattices.}
	\label{Lattice}
\end{figure*}

The number of qubits scaling linearly with the number of qubits naturally circumvent the issue of having generalization property, a crucial property for the NQS ans{\"a}tze to function \cite{Westerhout_2020}. In fact, in a related work of us \cite{Zheng2022PQC}, we showed that, making use of the representation theory of the symmetric group, this leads to the super-exponential quantum speed-up. To this end, it is unlikely that any classically trained ans{\"a}tze are capable to reinforce the global SU(2) symmetry of the system. 


\subsection{Numerical Simulation}~\label{numsim}

We provide numerical simulations to showcase  the effectiveness of the $S_n$-CQA ans{\"a}tze, using JAX automatic differentiation framework \cite{JAX}.  The implementation of the $S_n$-CQA ans{\"a}tze utilizes the classical Fourier space activation by working in the $S_n$ irreducible representation subspace where the ground state energy lies.  This would impact the stability of the numerical simulations, which imply the best-suited models are with 8-16 spins. This bottleneck in computational resource, as shown in Section \ref{s3}, presents no issue for a potential larger-scale implementation on quantum computers. The benchmarked examples with RBM and Group-equivariant Convolutional Neural Network (GCNN) \cite{Roth_2021} are drawn from NetKet \cite{NetKet} tutorial \url{https://www.netket.org/tutorials.html}, which form the baseline comparison. Note that we implemented no explicit global SU(2) or U(1) symmetry for these benchmark algorithms. For numerical simulation of $S_n$-CQA, we perform random initialization of the parameters. We found that the random initialization already returns the energy which is within roughly $10^{-2}$ precision within ED ground state energy and non-oscillating descents around the ED ground state energy comparing with that of GCNN and RBM. This is likely due to the fact that we used the $S_n$-Fourier space activation with real-valued trial wavefunctions with explicit SU(2) symmetry. We record the optimized energy for the $S_n$-CQA ans{\"a}tze every five iterations, and we set the number of alternating layers $p=4$ for the $3 \times 4$ lattice and $p=6$ for the 12-spin Kagome lattice. In the implementation, we shift the Hamiltonian to $\wt{H}(\lambda) = H(\lambda)+ m \operatorname{1}_{d_{\lambda}} $ to ensure $\wt{H}(\lambda )$ is positive semi-definiteness in the $S_n$-irrep specificity by the partition $\lambda = (\lambda_1, \lambda_2)$ with the total spin label $j = (\lambda_1 -\lambda_2)/2$ (Section \ref{s2}), where $m$ is the total number of transpositions. We only take the real (normalized) part of the wavefunction $\operatorname{Re}(\psi) = \psi + \psi^*$. This can be seen as a post-processing step for the realization of $S_n$-CQA on a quantum computer. We use the Nesterov-accelerated Adam \cite{Dozat_2016} for the $S_n$-CQA optimization with hyper-parameters: $\operatorname{betas} = [0.99, 0.999]$. We also utilize NetKet's ED (Exact Diagonalization) result for the comparison with the exact ground state energy. The ground state is additionally calculated in the Young (Schur) basis by diagonalizing the Heisenberg Hamiltonian $H(\lambda)$ in the irrep $\lambda$ where the ground state lives. It is worth mentioning that our optimized ground state is strictly real-valued and has explicitly SU(2) symmetry, offering the missing yet essential physical interpretation. We provide code and Jupiter notebook in open-access on Github in python. The numerical simulations are run in CPU platform with the 9th Gen 1.4 GHz Intel Core i5 processors. 

\begin{figure*}[pt]
\centering
\includegraphics[width=0.45\textwidth]{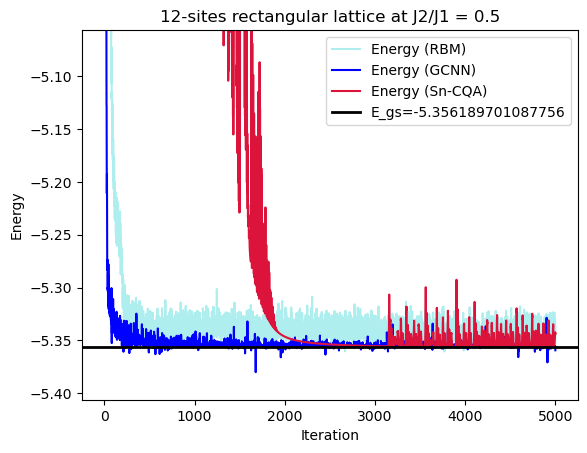}
\includegraphics[width=0.45\textwidth]{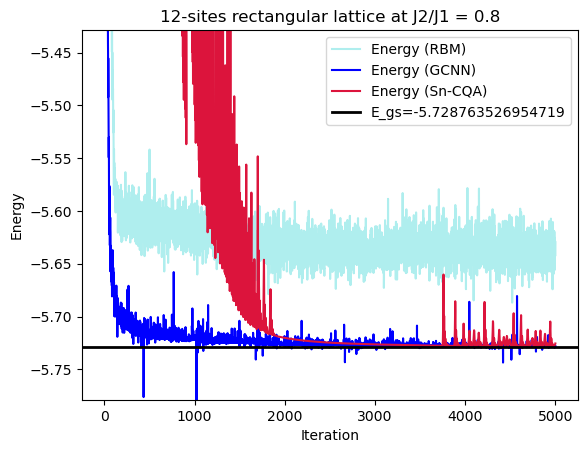}
\includegraphics[width=0.45\textwidth]{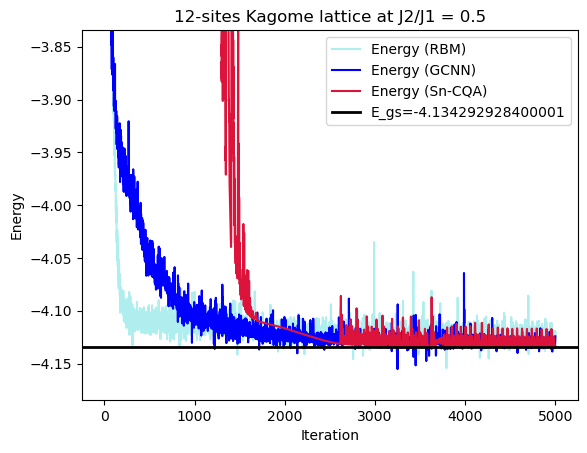}
\includegraphics[width=0.45\textwidth]{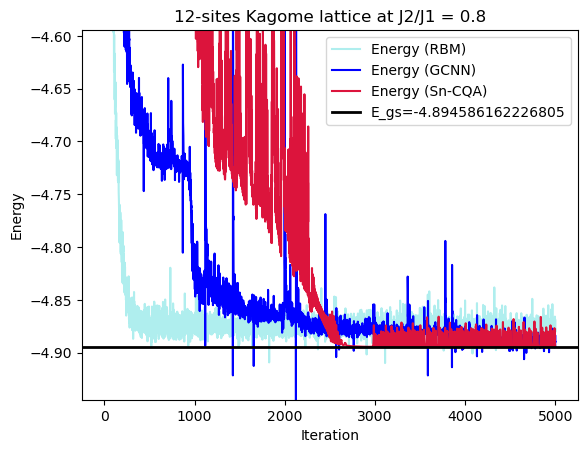}
\caption{For the $3 \times 4$ lattice, in either case, the $S_n$-CQA ans{\"a}tze are able to converge to the ground state at least $10^{-4}$ precision with explicitly reinforced SU($2$) symmetry. This can be seen from the expectation of Sn-CQA never falls below the exact ground state, while non-symmetry respecting algorithms inevitably do. The numerical results are subject to room for further development, for instance-- with better gradient descent algorithms such as to utilize the Hessian--since we have only 200-500 learnable parameters to optimize.   Therefore, we expect the performance and convergence rate of $S_n$-CQA ans{\"a}tze to further increase with perhaps more refined tuning.}
\label{simulation}
\end{figure*}

\subsubsection*{$3 \times 4$ Rectangular Lattice }

In frustrated region of $J_2/J_1 = 0.5, J_2/J_1=0.8$, we found that the ground state lives entirely in the total spin $0$ irrep, corresponding to the partition $\lambda = (4,4)$. In the case of $J_2 = 0.5$, we report that the $S_n$-CQA ans{\"a}tze are able to smoothly converge to the ground state, with error to the exact ground state energy $9.1049e^{-5}$. For $J_2 =0.8$, the $S_n$-CQA returns $5.0587e^{-4}$ precision to the ED ground state energy. We notice that the  $S_n$-CQA seem always to converge to the ground state with reasonable good accuracy without the issue of trapping in local minima, regardless of initialization (random initialization from Gaussian is used). The Learning rate used here is $0.01$. For the GCNN layers in both $J_2$ values we set the feature dimensions of hidden layers $(8,8,8,8)$ and $1024$ samples with the learning rate set for $0.02$. For the RBM model, we fix the learning rate $0.02$ with $1024$ samples.


\subsubsection*{12-Spin Kagome Lattice}

We found by comparing with ED result that the ground state of 12-spin Kagome lattice lives in the total spin $2$ irrep, corresponding to partition $[8, 4]$ in $J_2 =0$, which suggests it to be $5$-fold degenerate. For the both frustration level $J_2/J_1 = 0.5$ and $J_2/J_1 = 0.8$, the ground state lives in total spin 0 irrep, which appear to be non-degenerate. 
We aim to learn the ground state for the $12$-spin Kagome lattice at $J_2/J_1 = 0.5$ and $J_2/J_1 = 0.8$. In the case of $J_2/J_1 = 0.5$, the optimized ground state energy by the $S_n$-CQA ans{\"a}tze at the end of iteration returns $1.5721e^{-4}$ precision to the ED result. In the case $J_2/J_1 = 0.8$, we have the final optimized energy $6.2065e^{-5}$ precision to the ED ground state energy.  The learning rate is set for $0.01$ for the $J_2/J_1 = 0.5$ and $0.8$. We set the GCNN in both frustration points of feature dims $(8,8,8,8)$ with $1024$ samples. The learning rate in both frustrations is set to be $0.02$. The RBM implementation uses $1024$ samples with a learning rate $0.02$ for both cases. 


\section*{Discussion}

In this paper, we introduce a framework to design non-Abelian group-equivariant quantum variational ans{\"a}tze as an example of PQC+ extended from Permutational Quantum Computation (PQC). The restricted universality of the $S_n$-CQA ans{\"a}tze makes it applicable to a wide array of practical problems which would explicitly encode permutation equivariant structure or exhibit global SU($d$) symmetry. Our proof techniques can be used to show the universality of QAOA and verify the four-locality of generic $\operatorname{SU}(d)$ symmetric quantum circuits. Moreover, we illustrate the remarkable efficacy of our approach by finding the ground state of the Heisenberg antiferromagnet $J_1$-$J_2$ spins in a $3 \times 4$ rectangular lattice and 12-spin Kagome lattice in highly frustrated regimes near the speculated phase transition boundaries. We provided strong numerical evidence that our $S_n$-CQA can approximate the ground state with high degree of precision, and strictly respecting SU($2$) symmetry. This opens up new avenues for using representation theory and quantum computing in solving quantum many-body problems.

\subsubsection*{Open Problems}

We conclude with several interesting open problems: $(a)$ We would like to find out the computational power of PQC+. In particular, it is interesting to investigate whether quantum circuits can (in polynomial time) approximate matrix elements of any $S_n$ Fourier coefficients. A natural starting place is perhaps based on the restricted universality of $S_n$-CQA ans{\"a}tze in each $S_n$ irrep by asking if a polynomial bounded number if alternating layers $p$ are able to approximate any matrix element of $S_n$ Fourier coefficients. Or we may further loose the condition by asking if a polynomial bounded number if alternating layers $p$ would form an approximate k-design for subgroups $\operatorname{U}(S^\lambda)$ restricted from $\operatorname{U}(V^{\otimes n})$ when imposing the global $\operatorname{SU}(d)$ symmetry. A detailed study of this question will shed some light on the nature and scope of the prospective quantum advantage. $(b)$ In the SM we show that $S_n$-CQA ans\"atze at large $p$ can simulate certain quantum adiabatic evolution with random path-dependent coupling strengths. It would be important to investigate whether the path-dependent coupling strength parameters $\beta_{kl}$ lead to potential amplitude amplification of the spectral gap in the adiabatic path. In particular, one might need to address the physical dynamics of the random path-dependent coupling strengths. $(c)$ More generally, the quantum speed-up we demonstrated here is inherently connected to the PQC
+. Are there other quantum speed-ups within this framework? In particular, $(b)$ suggests a possible route related to quantum annealing. Another possible route may have to do with measurement-based quantum advantage. For instance, see \cite{qml_pres1}. Therefore, one might want to ask if our $S_n$-CQA ans{\"a}tze have other sources of quantum exponential speed-up.  $(d)$ Another open direction would be to benchmark the performance of the $S_n$-CQA ans{\"a}tze in various Heisenberg models and to implement the $S_n$-CQA ans{\"a}tze on a quantum device. 


\section*{Code Availability}
The codes for the numerical simulation can be found at \url{https://github.com/hanzheng98/Sn-CQA}. The C++ implementation of $S_n$ operations can be found at \url{https://github.com/risi-kondor/Snob2}. Data availability is upon request by emailing  \url{hanz98@uchicago.edu}. 


\section*{Acknowledgement}

H.Z. and Z.L. are contributed equally. We thank Alexander Bogatskiy, Giuseppe Carleo, Jiequn Han, Antonio
Mezzacapo, Horace Pan, Christopher Roth, Hy Truong
Son, Miles Stoudenmire, Kristan Temme, Erik H. Thiede,
Chihchan Tien, Zilong Zhang, Wenda Zhou, and Pei
Zeng for their useful discussions. We especially thank
Liang Jiang and Kanav Setia for suggestions during
the preparation of the manuscript. J.L. is supported in
part by International Business Machines (IBM) Quantum through the Chicago Quantum Exchange, and the
Pritzker School of Molecular Engineering at the University of Chicago through AFOSR MURI (FA9550-21-1-
0209). S.S. acknowledges support from the Royal Society
University Research Fellowship.


\bibliographystyle{apsrev4-1}
\bibliography{cs2.bib}

\pagebreak
\clearpage
\foreach \x in {1,...,\the\pdflastximagepages}
{
	\clearpage
	\includepdf[pages={\x,{}}]{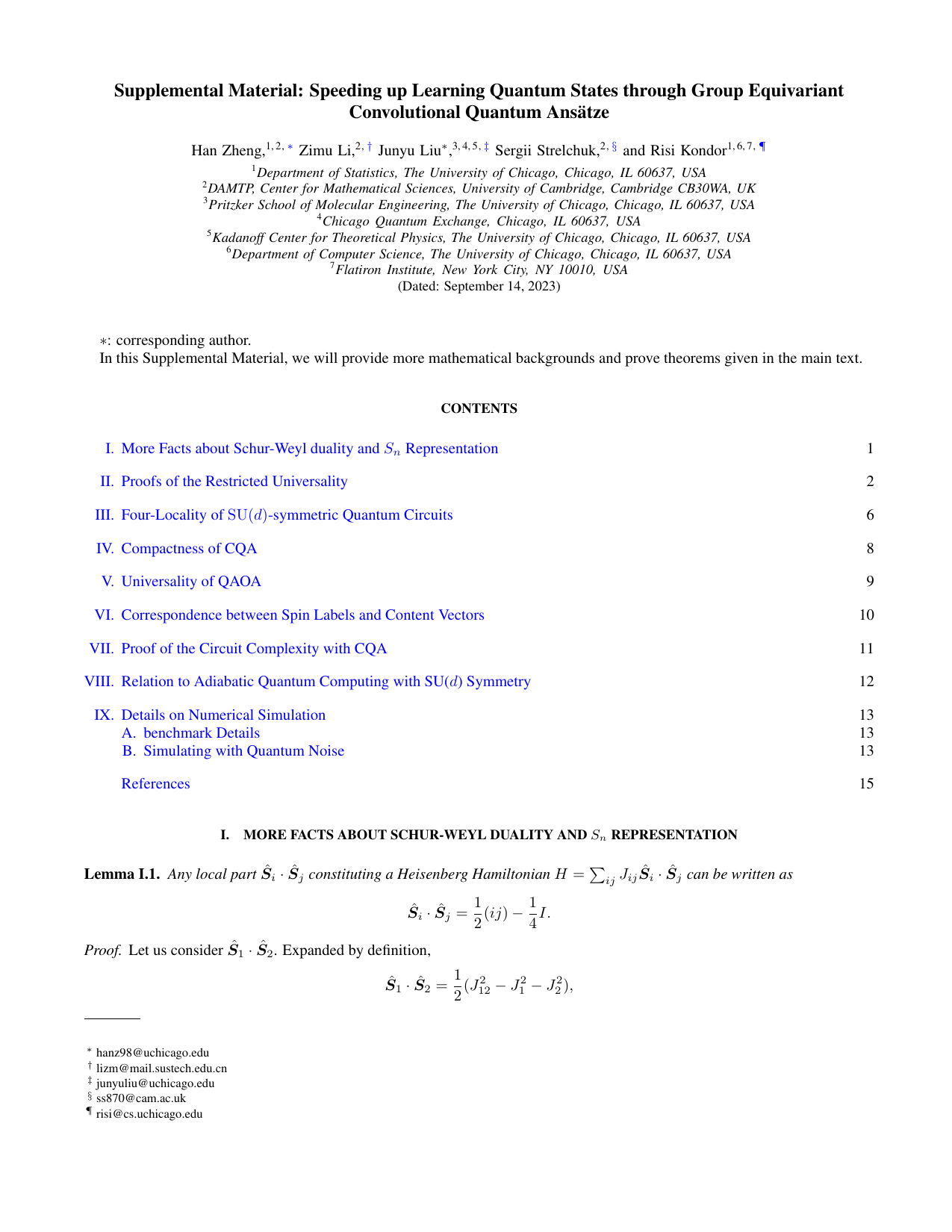}
}

\end{document}